\newcommand{\lp}[1]{\ifthenelse{\equal{#1}{0}}{(}{}\ifthenelse{\equal{#1}{1}}{\bigl(}{}\ifthenelse{\equal{#1}{2}}{\Bigl(}{}\ifthenelse{\equal{#1}{3}}{\biggl(}{}\ifthenelse{\equal{#1}{4}}{\Biggl(}{}\ifthenelse{\equal{#1}{5}}{\Biggl(}{}}
\newcommand{\rp}[1]{\ifthenelse{\equal{#1}{0}}{)}{}\ifthenelse{\equal{#1}{1}}{\bigr)}{}\ifthenelse{\equal{#1}{2}}{\Bigr)}{}\ifthenelse{\equal{#1}{3}}{\biggr)}{}\ifthenelse{\equal{#1}{4}}{\Biggr)}{}\ifthenelse{\equal{#1}{5}}{\Biggr)}{}}
\newcommand{\lbc}[1]{\ifthenelse{\equal{#1}{0}}{\{}{}\ifthenelse{\equal{#1}{1}}{\bigl\{}{}\ifthenelse{\equal{#1}{2}}{\Bigl\{}{}\ifthenelse{\equal{#1}{3}}{\biggl\{}{}\ifthenelse{\equal{#1}{4}}{\Biggl\{}{}\ifthenelse{\equal{#1}{5}}{\Biggl\{}{}}
\newcommand{\rbc}[1]{\ifthenelse{\equal{#1}{0}}{\}}{}\ifthenelse{\equal{#1}{1}}{\bigr\}}{}\ifthenelse{\equal{#1}{2}}{\Bigr\}}{}\ifthenelse{\equal{#1}{3}}{\biggr\}}{}\ifthenelse{\equal{#1}{4}}{\Biggr\}}{}\ifthenelse{\equal{#1}{5}}{\Biggr\}}{}}
\newcommand{\lba}[1]{\ifthenelse{\equal{#1}{0}}{\langle}{}\ifthenelse{\equal{#1}{1}}{\bigl\langle}{}\ifthenelse{\equal{#1}{2}}{\Bigl\langle}{}\ifthenelse{\equal{#1}{3}}{\biggl\langle}{}\ifthenelse{\equal{#1}{4}}{\Biggl\langle}{}\ifthenelse{\equal{#1}{5}}{\Biggl\langle}{}}
\newcommand{\rba}[1]{\ifthenelse{\equal{#1}{0}}{\rangle}{}\ifthenelse{\equal{#1}{1}}{\bigr\rangle}{}\ifthenelse{\equal{#1}{2}}{\Bigr\rangle}{}\ifthenelse{\equal{#1}{3}}{\biggr\rangle}{}\ifthenelse{\equal{#1}{4}}{\Biggr\rangle}{}\ifthenelse{\equal{#1}{5}}{\Biggr\rangle}{}}
\newcommand{\ve}[1]{\ifthenelse{\equal{#1}{0}}{|}{}\ifthenelse{\equal{#1}{1}}{\big|}{}\ifthenelse{\equal{#1}{2}}{\Big|}{}\ifthenelse{\equal{#1}{3}}{\bigg|}{}\ifthenelse{\equal{#1}{4}}{\Bigg|}{}\ifthenelse{\equal{#1}{5}}{\Bigg|}{}}
\newcommand{\lb}[1]{\ifthenelse{\equal{#1}{0}}{[}{}\ifthenelse{\equal{#1}{1}}{\bigl[}{}\ifthenelse{\equal{#1}{2}}{\Bigl[}{}\ifthenelse{\equal{#1}{3}}{\biggl[}{}\ifthenelse{\equal{#1}{4}}{\Biggl[}{}\ifthenelse{\equal{#1}{5}}{\Biggl[}{}}
\newcommand{\rb}[1]{\ifthenelse{\equal{#1}{0}}{]}{}\ifthenelse{\equal{#1}{1}}{\bigr]}{}\ifthenelse{\equal{#1}{2}}{\Bigr]}{}\ifthenelse{\equal{#1}{3}}{\biggr]}{}\ifthenelse{\equal{#1}{4}}{\Biggr]}{}\ifthenelse{\equal{#1}{5}}{\Biggr]}{}}
\newcommand{\srp}[3]{\ifthenelse{\equal{#1}{0}}{)^{#2}_{#3}}{}\ifthenelse{\equal{#1}{1}}{\bigr)^{#2}_{#3}}{}\ifthenelse{\equal{#1}{2}}{\Bigr)^{#2}_{#3}}{}
\ifthenelse{\equal{#1}{3}}{\biggr)^{#2}_{#3}}{}\ifthenelse{\equal{#1}{4}}{\Biggr)^{#2}_{#3}}{}\ifthenelse{\equal{#1}{5}}{\Biggr)^{#2}_{#3}}{}}
\newcommand{\srb}[3]{\ifthenelse{\equal{#1}{0}}{]^{#2}_{#3}}{}\ifthenelse{\equal{#1}{1}}{\bigr]^{#2}_{#3}}{}\ifthenelse{\equal{#1}{2}}{\Bigr]^{#2}_{#3}}{}
\ifthenelse{\equal{#1}{3}}{\biggr]^{#2}_{#3}}{}\ifthenelse{\equal{#1}{4}}{\Biggr]^{#2}_{#3}}{}\ifthenelse{\equal{#1}{5}}{\Biggr]^{#2}_{#3}}{}}
\newcommand{\srbc}[3]{\ifthenelse{\equal{#1}{0}}{{\}}^{#2}_{#3}}{}\ifthenelse{\equal{#1}{1}}{\bigr{\}}^{#2}_{#3}}{}\ifthenelse{\equal{#1}{2}}{\Bigr{\}}^{#2}_{#3}}{}
\ifthenelse{\equal{#1}{3}}{\biggr{\}}^{#2}_{#3}}{}\ifthenelse{\equal{#1}{4}}{\Biggr{\}}^{#2}_{#3}}{}\ifthenelse{\equal{#1}{5}}{\Biggr{\}}^{#2}_{#3}}{}}
\newcommand{\srba}[3]{\ifthenelse{\equal{#1}{0}}{\rangle^{#2}_{#3}}{}\ifthenelse{\equal{#1}{1}}{\bigr\rangle^{#2}_{#3}}{}\ifthenelse{\equal{#1}{2}}{\Bigr\rangle^{#2}_{#3}}{}
\ifthenelse{\equal{#1}{3}}{\biggr\rangle^{#2}_{#3}}{}\ifthenelse{\equal{#1}{4}}{\Biggr\rangle^{#2}_{#3}}{}\ifthenelse{\equal{#1}{5}}{\Biggr\rangle^{#2}_{#3}}{}}
\newcommand{\sve}[3]{\ifthenelse{\equal{#1}{0}}{|^{#2}_{#3}}{}\ifthenelse{\equal{#1}{1}}{\bigr|^{#2}_{#3}}{}\ifthenelse{\equal{#1}{2}}{\Bigr|^{#2}_{#3}}{}
\ifthenelse{\equal{#1}{3}}{\biggr|^{#2}_{#3}}{}\ifthenelse{\equal{#1}{4}}{\Biggr|^{#2}_{#3}}{}\ifthenelse{\equal{#1}{5}}{\Biggr|^{#2}_{#3}}{}}
\newcommand{\im}{\mathcal{I}m}
\newcommand{\re}{\mathcal{R}e}
\newcommand{\cylindreb}[1]{\ensuremath{\tilde{S}_{#1}}}
\newcommand{\paraboleb}[1]{\ensuremath{S_{#1}}}
\newcommand{\disqueb}[1]{\ensuremath{D_{#1}}}
\newcommand{\demidisqueb}[1]{\ensuremath{D_{#1}^+}}
\newcommand{\demiplanb}{\ensuremath{\mathbbm{C}^+}}
\newcommand{\nevb}[1]{\ensuremath{\tilde{D}_{#1}}}
\newcommand{\parabole}[1]{{\paraboleb{#1}}}
\newcommand{\disque}[1]{{\disqueb{#1}}}
\newcommand{\demidisque}[1]{{\demidisqueb{#1}}}
\newcommand{\demiplan}[1]{{\demiplanb}#1}
\newcommand{\nev}[1]{{\nevb{#1}}}
\newcommand{\cylindre}[1]{{\cylindreb{#1}}}
\newcommand{\AMSclass}[1]{{\textbf{A.M.S. subject classification:} #1}}
\newcommand{\assign}{:=}
\newcommand{\keywords}[1]{{\textbf{Keywords:} #1}}
\newcommand{\tmdummy}{$\mbox{}$}
\newcommand{\tmmathbf}[1]{\ensuremath{\boldsymbol{#1}}}
\newcommand{\tmop}[1]{\ensuremath{\operatorname{#1}}}
\newcommand{\tmscript}[1]{\text{\scriptsize{$#1$}}}
\newcommand{\tmtextbf}[1]{{\bfseries{#1}}}
\newcommand{\tmtextit}[1]{{\itshape{#1}}}
\newcommand{\tmtextup}[1]{{\upshape{#1}}}
\newenvironment{proof}{\noindent\textbf{Proof\ }}{\hspace*{\fill}$\Box$\medskip}
\numberwithin{equation}{section}  
\numberwithin{figure}{section}  
\newtheorem{theorem}{Theorem}[section]
\newtheorem{lemma}[theorem]{Lemma}
\newtheorem{proposition}[theorem]{Proposition}
\newtheorem{corollary}[theorem]{Corollary}
\newtheorem{remark}[theorem]{Remark}
\newtheorem{notation}[theorem]{Notation}
\begin{document}

\title{Borel summation of the small time expansion of the heat kernel with a
vector potential\thanks{This paper has been written using the GNU TEXMACS scientific text editor.}
\thanks{\keywords{heat kernel, quantum mechanics, vector
potential, first order perturbation, semi-classical, Borel summation, asymptotic expansion,
Wigner-Kirkwood expansion, Poisson formula, Wiener integral, Feynman
integral}; \AMSclass{35K08, 35C20, 30E15, 81Q30}}}\author{Thierry
Harg\'e} \date{January 22, 2013}\maketitle

\begin{abstract}
  Let $p_t$ be the heat kernel associated to the Laplacian with a vector
  potential. We prove under rather strong assumptions on this potential that
  the small time expansion of $p_t$ is Borel summable. An explicit formula for
  $p_t$ plays a central role. In the periodic case, a Poisson's formula is
  introduced.
\end{abstract}

\section{Introduction}

Let $\nu \geqslant 1$. Let $c_0, \ldots, c_{\nu}$ be regular square
matrix-valued functions on $\mathbbm{R}^{\nu}$. Let $H$ be the operator
defined by
\[ H \assign - (\partial_{x_1}^2 + \cdots + \partial_{x_{\nu}}^2) + 2 \lp{1}
   c_1 (x) \partial_{x_1} + \cdots + c_{\nu} (x) \partial_{x_{\nu}} \rp{1} +
\]
\begin{equation}
  \label{intro2} (\partial_{x_1} c_1 + \cdots + \partial_{x_{\nu}} c_{\nu}) -
  c_0 (x) .
\end{equation}
Let $p_t (x, y)$ be the heat kernel associated to this operator. Let
$p_t^{\tmop{conj}} (x, y)$ be the conjugate heat kernel given by
\[ p_t (x, y) = (4 \pi t)^{- \frac{\nu}{2}} \exp \lp{2} - \frac{(x - y)^2}{4
   t} \rp{2} p_t^{\tmop{conj}} (x, y) . \]
In a previous work about the scalar potential case ($c_1 = 0, \ldots, c_{\nu}
= 0)$, we found conditions on $c_0$ providing Borel summation of the small
time expansion of the conjugate heat kernel. We also studied the partition
function on the torus. In this paper, similar questions are considered in the
vector potential case. A general motivation for this work is a question asked
by Balian and Bloch in [B-B]. Can quantum quantities be exactly recovered with
the help of classical quantities? Indeed, coefficients of the small time
expansion are classical quantities and then Borel summation gives a positive
answer to this question for the quantum quantity $\left\langle y|e^{- t H} |x
\right\rangle = p_t (x, y)$. \ See [Ha4] \ for partial comments about these
questions and references therein. Another motivation is the introduction and
the use of a so-called deformation formula which gives an ``explicit''
expression of the conjugate heat kernel (see Proposition \ref{aminah4}). Note
also that we always work in a complex setting ($t \in \mathbbm{C}$, $x, y \in
\mathbbm{C}^{\nu}$).

Let us now introduce more precisely our results. The following asymptotic
expansion (Minakshisundaram-Pleijel expansion) is well known
\[ \text{$p_t^{\tmop{conj}} (x, y) =$} a_0 (x, y) + a_1 (x, y) t + \cdots +
   a_{r - 1} (x, y) t^{r - 1} + t^r \mathcal{O}_{t \rightarrow 0^+} (1) . \]
We shall prove, under rather strong assumptions on $c_0, \ldots, c_{\nu}$,
that this expansion is Borel summable and that its Borel sum is equal to
$p_t^{\tmop{conj}} (x, y)$.

Assume now that $c_0, \ldots, c_{\nu}$ are defined on the torus ($\mathbbm{R}
/ \mathbbm{Z})^{\nu}$ with values in a space of \ $d \times d$ matrices and
that $c_0$ is Hermitian whereas $c_1, \ldots, c_{\nu}$ are anti-Hermitian. Let
$\lambda_1 \leqslant \lambda_2 \leqslant \cdots \leqslant \lambda_n \leqslant
\cdots, \lambda_n \rightarrow + \infty$ \ be the eigenvalues of the operator
$H$ acting on periodic $\mathbbm{C}^d$-valued functions. We shall prove the
Poisson formula: for small $t \in \mathbbm{C}$, $\mathcal{R} et > 0$,
\begin{equation}
  \label{intro6} \sum_{n = 1}^{+ \infty} e^{- \lambda_n t} = (4 \pi t)^{-
  \frac{\nu}{2}} \sum_{q \in \mathbbm{Z}^{\nu}} e^{- \frac{q^2}{4 t}} u_q (t),
\end{equation}
where, for $q \in \mathbbm{Z}^{\nu}$
\begin{equation}
  \label{intro6.5} u_q (t) = a_{0, q} + a_{1, q} t + \cdots + a_{r - 1, q}
  t^{r - 1} + \cdots
\end{equation}
In (\ref{intro6.5}), each expansion is Borel summable and $u_q$ denotes the
Borel sum of such an expansion.

Let us compare the scalar and the vector potential cases. We consider in [Ha4]
perturbations of $- (\partial_{x_1}^2 + \cdots + \partial_{x_{\nu}}^2) +
\alpha (x_1^2 + \cdots + x_{\nu}^2)$ with $\alpha \in \mathbbm{R}$. For the
vector potential case, we prefer to consider only perturbations of $-
(\partial_{x_1}^2 + \cdots + \partial_{x_{\nu}}^2)$, and then focus on the
perturbation term. This allows us to work easily with the Borel transform of
the conjugate heat kernel. The proofs are then simplified as already remarked
in [Ha4]. In what follows, we prove that this Borel transform is analytic on
the complex plane and is exponentially dominated by the Borel variable on
parabolic domains which are symmetric with respect to the positive real axis.
One can expect that this result can be improved (in the scalar case, the Borel
transform is exponentially dominated by the square root of Borel variable on
the same domains).

Our choice of notation in (\ref{intro2}) is imposed by our deformation formula
without any reference to a Hermitian product. Now, let $\sigma_1, \ldots,
\sigma_{\nu}$ be Dirac matrices ($\sigma_j \sigma_k + \sigma_k \sigma_j =
\delta_{j k})$. Then a Halmitonian like $\lp{1} \sigma \cdot \partial_x - i a
(x) \rp{1}^2$ is covered by our results both in an Abelian setting
(electromagnetic field) and a non-Abelian setting (Yang-Mills field).

In the scalar potential case, the deformation formula is considered by E.
Onifri [On] in a heuristic way and can be obtained with the help of Wiener's
(or Feynman's) formula and Wick's theorem [Ha4, Appendix]. In the vectorial
potential case, this connection also exists but it seems more difficult to
exhibit it. In [F-H-S-S], an algorithm, using Wick's theorem, is proposed for
computing, in a covariant invariant way, the coefficients of the small time
expansion of $p_t (x, x)$.

\section{Notation and main results}

For $z = |z|e^{i \theta} \in \mathbbm{C}$, $\theta \in] - \pi, \pi]$, let
$z^{1 / 2} \assign |z|^{1 / 2} e^{i \theta / 2}$. Let $T > 0$. \ Let
$\demiplan{\assign \{z \in \mathbbm{C} | \re (z) > 0\}}$, $\disque{T} \assign
\{z \in \mathbbm{C} | |z| < T\}$, $\demidisque{T} \assign \disque{T} \cap
\demiplan{}$ and $\nev{T} \assign \lbc{1} z \in \mathbbm{C} | \re (
\frac{1}{z}) > \frac{1}{T} \rbc{1}$. $\nev{T}$ is the open disk of center
$\frac{T}{2}$ and radius $\frac{T}{2}$. Let $\kappa > 0$. Let
$\cylindre{\kappa} \assign \left\{ z \in \mathbbm{C} | d (z, [0, + \infty [) <
\kappa \right\}$ and
\[ \parabole{\kappa} \assign \{z \in \mathbbm{C} | | \mathcal{I} mz^{1 / 2}
   |^2 < \kappa\}=\{z \in \mathbbm{C} | \mathcal{R} ez > \frac{1}{4 \kappa}
   \mathcal{I} m^2 z - \kappa\}. \]
$\parabole{\kappa}$ is the interior of a parabola which contains
$\cylindre{\kappa}$ (see figures 2.1 and 2.2 in [Ha4]).

We work with finite dimensional spaces of square matrices. We always consider
multiplicative norms on these spaces ($|AB| \leqslant |A\|B|$, for $A$ and $B$
square matrices) and we assume that $| \mathbbm{1} | = 1$. For $A = (a_{i,
j})_{1 \leqslant i, j \leqslant d}$ with $a_{i, j} \in \mathbbm{C}$, we set
$A^{\ast} = ( \bar{a}_{j, i})_{1 \leqslant i, j \leqslant d}$. For $\lambda,
\mu \in \mathbbm{C}^{\nu}$, we denote $\lambda \cdot \mu \assign \lambda_1
\mu_1 + \cdots + \lambda_{\nu} \mu_{\nu}$, $\bar{\lambda} \assign (
\bar{\lambda}_1, \ldots, \bar{\lambda}_{\nu})$, $\mathcal{I} m \lambda \assign
( \mathcal{I} m \lambda_1, \ldots, \mathcal{I} m \lambda_{\nu})$, $\lambda^2
\assign \lambda \cdot \lambda$, $| \lambda | \assign (\lambda \cdot
\bar{\lambda})^{1 / 2}$ (if $\lambda \in \mathbbm{R}^{\nu}$, $| \lambda | =
\sqrt{\lambda^2}$). Later, we shall extend these notations to operators and
measures.

Let $\Omega$ be an open domain in $\mathbbm{C}^m$ and let $F$ be a complex
finite dimensional space. We denote by $\mathcal{A} (\Omega)$ the space of
$F$-valued analytic functions on $\Omega$, if there is no ambiguity on $F$.

Let $\mathfrak{B}$ denote the collection of all Borel sets on
$\mathbbm{R}^m$. An $F$-valued measure $\mu$ on $\mathbbm{R}^m$ is an
$F$-valued function on $\mathfrak{B}$ satisfying the classical countable
additivity property (cf. [Ru]). Let $| \cdot |$ be a norm on $F$. We denote by
$| \mu |$ the positive measure defined \ by
\[ | \mu | (E) = \sup \sum_{j = 1}^{\infty} | \mu (E_j) | (E \in
   \mathfrak{B}), \]
the supremum being taken over all partitions $\{E_j \}$ of $E$. In particular,
$| \mu | (\mathbbm{R}^m) < \infty$. Note that $d \mu = h d| \mu |$ where $h$
is some $F$-valued function satisfying $|h| = 1$ $| \mu |$-a.e. In the sequel,
we shall consider vector spaces $F^{\nu}$ and $F^{\nu + 1}$ with the following
norms (depending on the norm of $F$). For $f = (f_0, \ldots, f_{\nu}) \in
F^{\nu + 1}$, we set $\tmmathbf{f} \assign (f_1, \ldots, f_{\nu})$ and $|f
|^{\star} \assign \max (| \tmmathbf{f} |, |f_0 |)$. If $f = (f_0, \ldots,
f_{\nu})$ is an $F^{\nu + 1}$-valued function on $\mathbbm{R}^m$ and $\theta$
\ is a positive measure such that
\[ \int_{\mathbbm{R}^m} |f_0 | d \theta < \infty, \ldots, \int_{\mathbbm{R}^m}
   |f_{\nu} | d \theta < \infty, \]
one can define an $F^{\nu}$-valued (respectively $F^{\nu + 1}$-valued) measure
$\tmmathbf{\lambda}$ (respectively $\lambda$) by setting $d \tmmathbf{\lambda}
= \tmmathbf{f} d \theta$ (respectively $d \lambda = f d \theta$). Then $d|
\tmmathbf{\lambda} | = | \tmmathbf{f} |d \theta$ and $d| \lambda |^{\star} =
|f |^{\star} d \theta$. Let $\theta$ be a positive measure, $w$ be a positive
function and $\mu$ be a normed vector space valued measure. The notation $|d
\mu | \leqslant w d \theta$ means that $| \mu (E) | \leqslant \int_E w d
\theta$ (or $| \mu | (E) \leqslant \int_E w d \theta$) for every $E \in
\mathfrak{B}$. Equivalently, there exists a vector-valued function $h$ such
that $d \mu = h d \theta$ and $|h| \leqslant w$ $\theta$-a.e. For instance,
let $u_1, \ldots ., u_{\nu}$ be measurable $\mathbbm{C}$-valued functions and
$\lambda_1, \ldots, \lambda_{\nu}$ be $F$-valued measures. Then $|
\tmmathbf{u} \cdot d \tmmathbf{\lambda} | \leqslant | \tmmathbf{u} |d |
\tmmathbf{\lambda} |$ (where $\tmmathbf{u} \cdot d \tmmathbf{\lambda} \assign
u_1 d \lambda_1 + \cdots + u_{\nu} d \lambda_{\nu}$).

We refer to [Ha4] for a rigorous definition of Borel and Laplace transform.
Roughly speaking, assuming that $f$ (respectively $\hat{f}$) is a function of
a complex variable $t$ (respectively $\tau$), $f$ is the Laplace transform of
$\hat{f}$ if
\begin{equation}
  \label{venus7} f (t) = \int_0^{+ \infty} \hat{f} (\tau) e^{- \frac{\tau}{t}}
  \frac{d \tau}{t}
\end{equation}
whereas $\hat{f}$ is the Borel transform of $f = a_0 + a_1 t + \cdots + a_n
t^n + \cdots$ if
\[ \text{$\hat{f} (\tau) = \sum_{r = 0}^{\infty} \frac{a_r}{r!} \tau^r$} . \]
With suitable assumptions, these two transforms are inverse each to other. In
the whole paper, sums indexed by an empty set are, by convention, equal to
zero.

\begin{theorem}
  \label{venus9}Let $\varepsilon > 0$. Let $\lambda_0, \ldots, \lambda_{\nu}$
  be measures on $\mathbbm{R}^{\nu}$ with values in a complex finite
  dimensional space of square matrices verifying for $q \in \{0, \ldots,
  \nu\}$
  \begin{equation}
    \label{venus10} \int_{\mathbbm{R}^{\nu}} \exp (\varepsilon \xi^2) d |
    \lambda_q | (\xi) < \infty .
  \end{equation}
  Let
  \[ c_q (x) \assign \int \exp (i x \cdot \xi) d \lambda_q (\xi) . \]
  Let $\tmmathbf{c} (x) \assign \lp{1} c_1 (x), \ldots, c_{\nu} (x) \rp{1}$
  and $\partial_x \cdot \tmmathbf{c} \assign \partial_{x_1} c_1 + \cdots +
  \partial_{x_{\nu}} c_{\nu}$ . Let $u${\footnote{Since $c_0, \ldots, c_{\nu}$
  are analytic and bounded on $\mathbbm{R}^{\nu}$, (\ref{venus12}) admits a
  unique analytic solution on $\mathbbm{R}^+ \times \mathbbm{R}^{2 \nu}$.}} be
  the solution of
  \begin{equation}
    \label{venus12} \left\{ \begin{array}{l}
      \partial_t u \text{$= \partial_x^2 u$} - 2 \tmmathbf{c} (x) \cdot
      \partial_x u + \lp{1} c_0 (x) - \partial_x \cdot \tmmathbf{c} \rp{1} u\\
      \\
      u_{} |_{t = 0^+} =_{} \delta_{x = y} \mathbbm{1}
    \end{array} \right. .
  \end{equation}
  Let $v$ be defined by u=$(4 \pi t)^{- \nu / 2} e^{- \frac{(x - y)^2}{4 t}}
  v$. Then $v$ admits a Borel transform $\hat{v}$ (with respect to $t$) which
  is analytic on $\mathbbm{C}^{1 + 2 \nu}$. Let $\kappa, R > 0$. Let $C$ be
  defined by
  \[ C \assign \int_{\mathbbm{R}^{\nu}} \exp \lp{2} \frac{\varepsilon}{2}
     \xi^2 + (1 + R) | \xi | + \frac{2 \kappa}{\varepsilon} \rp{2} d| \lambda
     |^{\star} (\xi) . \]
  Then, for every $(\tau, x, y) \in \parabole{\kappa} \times \mathbbm{C}^{2
  \nu}$ such that $| \mathcal{I} mx| < R$ and $| \mathcal{I} my| < R$,
  \begin{equation}
    \label{venus14} | \hat{v} (\tau, x, y) | \leqslant e^{C|x - y|} e^{C (|
    \tau | + 2| \tau |^{1 / 2})} .
  \end{equation}
\end{theorem}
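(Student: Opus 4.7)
The plan is to invoke the deformation formula (Proposition \ref{aminah4}), which expresses $v$ as an explicit (convergent) series built from the coefficients $c_0,\ldots,c_\nu$ and Gaussian expectations coming from the free heat kernel. Writing this series as $v=\sum_{n\geq 0} v_n$, each $v_n$ is an $n$-fold time-ordered Gaussian integral of a polynomial in $\tmmathbf{c},c_0,\partial_x\cdot\tmmathbf{c}$, and so is entire in $t,x,y$. Substituting the Fourier representations $c_q(x)=\int e^{ix\cdot\xi}\,d\lambda_q(\xi)$ converts every $v_n$ into an iterated integral against $d\lambda_{q_1}\otimes\cdots\otimes d\lambda_{q_n}$, in which the $x$- and $y$-dependence enters only through a linear exponential and the Gaussian integration in the internal position variables produces, after completing the square, an exponential quadratic in the frequency variables $\xi_1,\ldots,\xi_n$ with coefficient of order $t$.

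First, I would make this explicit by carrying out the Gaussian integrals in the deformation formula with the trial exponential $e^{i\sum_j y_j\cdot\xi_j}$ inside, obtaining an expression of the form
\[
v_n(t,x,y)=\int_{(\mathbbm{R}^{\nu})^n} K_n(t,x,y,\xi_1,\ldots,\xi_n)\,d\lambda_{q_1}(\xi_1)\cdots d\lambda_{q_n}(\xi_n),
\]
where $K_n$ is a polynomial of controlled degree in the $\xi_j$ multiplied by $t^n$ and by $\exp\bigl(i\sum \xi_j\cdot(\text{affine in }x,y)+t\,Q_n(\xi_1,\ldots,\xi_n)\bigr)$ with $Q_n$ a quadratic form whose real part is non-positive for $\tmop{Re}\,t\geq 0$. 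Because the prefactor carries $t^n$, term-by-term Borel transformation gives $\hat v_n=\tau^n/n!$ times the same frequency integral but with $t$ replaced by a variable that can be analytically continued; the resulting $\hat v(\tau,x,y)=\sum \hat v_n$ is entire in $(\tau,x,y)\in\mathbbm{C}^{1+2\nu}$.

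Second, I would obtain the bound (\ref{venus14}) by estimating each $\hat v_n$ pointwise and summing. On the parabolic domain $\parabole{\kappa}$ one has $\tmop{Re}\,\tau\geq \frac{1}{4\kappa}\,\mathcal{I}m^2\tau-\kappa$, which when inserted into $e^{\tau Q_n}$ produces at worst a factor $\exp\!\bigl(\frac{|\tau|}{\varepsilon}\sum|\xi_j|^2+\tfrac{2\kappa}{\varepsilon}\sum 1\bigr)$ (I use the $\varepsilon$ from hypothesis (\ref{venus10}) to absorb the $|\tau|$-dependent Gaussian coefficient). The $x,y$-dependence shows up as $\exp\bigl(\sum \xi_j\cdot(\text{something of modulus}\leq |x-y|+\text{error})\bigr)$ on the relevant strip $|\mathcal{I}mx|,|\mathcal{I}my|<R$, giving factors $\exp\bigl((1+R)|\xi_j|+|x-y|\,|\xi_j|\bigr)$ after a triangle-inequality bound. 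Combining these with the $\tau^n/n!$ factor, the assumption (\ref{venus10}) makes each single $\xi_j$-integral bounded by the constant $C$ of the statement, and the full $n$-fold integral contributes $C^n$. Summing $\sum (C|\tau|)^n/n!$ and an analogous $|\tau|^{1/2}$-series coming from half-integer time powers produced by the $\partial_x\cdot\tmmathbf{c}$ term yields exactly $e^{C(|\tau|+2|\tau|^{1/2})}e^{C|x-y|}$.

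The main obstacle is step two: verifying that the Gaussian completion in the internal position variables yields a quadratic form whose contribution can actually be dominated by $\frac{|\tau|}{\varepsilon}\sum|\xi_j|^2$ uniformly in $n$, while simultaneously extracting the $|x-y|$-linear factor cleanly. This requires carefully tracking, in the deformation formula, how the bridge measure between $x$ and $y$ interacts with the frequencies $\xi_j$ and with the derivative terms $\partial_x\cdot\tmmathbf{c}$ (which contribute extra factors of $|\xi_j|$ and are responsible for the $|\tau|^{1/2}$ piece in the exponent). Once that combinatorial/analytic bookkeeping is handled, the sums collapse to the stated bound and the analyticity of $\hat v$ on $\mathbbm{C}^{1+2\nu}$ follows automatically by dominated convergence.
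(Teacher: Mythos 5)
Your starting point is right: the paper does build $\hat v$ from the deformation formula of Proposition \ref{aminah4} rewritten in Fourier form (Remark \ref{aminah9}), and does estimate $\hat v_n$ term by term before summing. But the heart of the argument is missing, and the substitute you propose would not close.

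The crucial point is this: after the Fourier substitution and the expansion of $e^{t\tilde P_n}$ (Lemma \ref{aminah11}), $v_n$ is a finite sum of terms of the form $t^m e^{-tB}\,d\mu_{n,m}(\xi)$ with $B=s(1-s)\cdot_n\xi\otimes\xi\geqslant 0$ and $m\leqslant n$. The Borel transform of $t\mapsto t^m e^{-Bt}$ is \emph{not} $\tau^m e^{-B\tau}$; it is the kernel $K_m(B,\tau)$ of Lemma \ref{aminah60}, which satisfies $|K_m(B,\tau)|\leqslant \frac{|\tau|^m}{m!}\exp\bigl(2\sqrt{B}\,|\mathcal{I}m(\tau^{1/2})|\bigr)$. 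This is precisely why $\parabole{\kappa}$ is the natural domain: on it $|\mathcal{I}m(\tau^{1/2})|^2<\kappa$, so the exponential is controlled by $\exp(2\kappa^{1/2}\sqrt{B})$, and then AM--GM (with $B\leqslant n\sum\xi_j^2$, Remark \ref{aminah10.4}) gives $2\kappa^{1/2}\sqrt{B}\leqslant \frac{2\kappa n}{\varepsilon}+\frac{\varepsilon}{2}\sum\xi_j^2$ — a bound whose coefficient of $\sum\xi_j^2$ is $\frac{\varepsilon}{2}$, independent of $\tau$. Your proposed bound $\exp\bigl(\frac{|\tau|}{\varepsilon}\sum|\xi_j|^2+\cdots\bigr)$ is both of the wrong form (it imagines a factor $e^{\tau Q_n}$ that the Borel transform does not produce) and, more seriously, would only be integrable against $d|\lambda|^\star$ via (\ref{venus10}) for $|\tau|\lesssim\varepsilon^2$, whereas the theorem must hold on all of $\parabole{\kappa}$.

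Two further misattributions compound this. First, the prefactor is not $\tau^n/n!$ but a sum $\sum_{m\leqslant n}\frac{|\tau|^m}{m!}$ coming from the $K_m$ bound applied across the decomposition of Lemma \ref{aminah11}. Second, the $|\tau|^{1/2}$ in (\ref{venus14}) does not come from ``half-integer time powers produced by $\partial_x\cdot\tmmathbf{c}$'' — the deformation formula's $V_n$ involves only $\tmmathbf{c}$ and $c_0$, not their derivatives. It comes from the combinatorics of the $p$ ``pairings'' $A_{\alpha,\beta}=\partial_{\bar z_\alpha}\cdot\partial_{\bar z_\beta}$ in (\ref{aminah17}): the final sum has the form $e^{C|x-y|}\sum_{p,q\geqslant 0}\frac{|\tau|^{p+q}}{p!(p+q)!}C^{2p+q}$, and the estimate $p!(p+q)!\geqslant\frac{(2p)!q!}{2^{2p}}$ splits this into $e^{C|\tau|}e^{2C|\tau|^{1/2}}$. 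Without Lemma \ref{aminah60} and without that combinatorial split, the proposal has no route to the stated bound.
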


\begin{remark}
  \label{venus15}By (\ref{venus14}) and (\ref{venus7}), the solution $u$ of
  (\ref{venus12}) through $\hat{v}$ is meaningful for $t \in \mathbbm{C}, \re 
  \lp{1} \frac{1}{t} \rp{1} > \frac{1}{C}$. Since $\parabole{\kappa}$ contains
  $\cylindre{\kappa}$, (\ref{venus14}) implies that the small time expansion
  of $v$ is Borel summable and that $v$ is equal to the Borel sum of this
  expansion (cf. [Ha4]).
\end{remark}

\begin{remark}
  One can expect that (\ref{venus14}) can be improved. Let us assume that $\nu
  = 1$, that the measure $\lambda_1$ takes its values in $\mathbbm{C}$ and
  satisfies (\ref{venus10}) where $\varepsilon$ is replaced by $2
  \varepsilon$. Let
  \[ \tilde{c} (x) \assign \int \exp (i x \cdot \xi) d \tilde{\lambda} (\xi)
  \]
  where $\tilde{\lambda}$ is a $\mathbbm{C}$-valued measure on $\mathbbm{R}$
  satisfying (\ref{venus10}). Let us choose $c_0 \assign c_1^2 + \tilde{c}$.
  Then $c_0$ is the Fourier transform of a measure satisfying (\ref{venus10})
  and the system (\ref{venus12}) is equivalent to
  \[ \left\{ \begin{array}{l}
       \partial_t u \text{$= \lp{1} \partial_x - c_1 (x) \srp{1}{2}{} u$} +
       \tilde{c} (x) u\\
       \\
       u_{} |_{t = 0^+} =_{} \delta_{x = y}
     \end{array} \right. . \]
  Performing the substitution
  \begin{equation}
    \label{venus17} u = \exp \lp{2} \int_y^x c_1 (z) d z \rp{2} \tilde{u}
  \end{equation}
  in the previous system allows one to use the scalar case Borel summation
  result [Ha4, Theorem 3.1]. Therefore (\ref{venus14}) can be improved as
  follows:
  \[ \text{$| \hat{v} (\tau, x, y) | \leqslant e^{C_1 |x - y|} e^{\tilde{C} |
     \tau |^{1 / 2}}$} \]
  where
  \[ C_1 \assign \int_{\mathbbm{R}} e^{R| \xi |} d | \lambda_1 | (\xi), \]
  \[ \tilde{C} \assign 2 \lp{2} \int_{\mathbbm{R}^{\nu}} \exp \lp{1} \frac{2
     \kappa}{\varepsilon} + \frac{\varepsilon}{2} \xi^2 + R| \xi | \rp{1} d|
     \tilde{\lambda} | (\xi) \srp{2}{1 / 2}{} . \]
  In particular, unlike in the general case (see Remark \ref{venus15}), the
  solution $u$ of (\ref{venus12}) through $\hat{v}$ is meaningful for $t \in
  \mathbbm{C}^+$, which is more natural. Actually the proof of Theorem
  \ref{venus9} uses a deformation formula which does no take into account the
  fondamental notion of magnetic field (or curvature). The substitution
  (\ref{venus17}) is efficient because the magnetic field vanishes in the
  one-dimensional case.
\end{remark}

\begin{corollary}
  \label{venus18}Let $\varepsilon > 0$ and $d \in \mathbbm{N}^{\ast}$. For
  every $(j, q) \in \{0, \ldots, \nu\} \times \mathbbm{Z}^{\nu}$, let $c_{j,
  q}$ be \ square matrices acting on $\mathbbm{C}^d$ such that $c_{0, - q} =
  c^{\ast}_{0, q}$ and $c_{j, - q} = - c^{\ast}_{j, q}$ for every $j \in \{1,
  \ldots, \nu\}$. \ Assume that
  \[ \sum_{q \in \mathbbm{Z}^{\nu}} e^{4 \pi^2 \varepsilon q^2} |c_{j, q} | <
     \infty \]
  for every $j \in \{0, \ldots, \nu\}$. Let $c_j (x) = \sum_{q \in
  \mathbbm{Z}^{\nu}} c_{j, q} e^{2 i \pi q \cdot x}$. Let $\Lambda_1 \leqslant
  \Lambda_2 \leqslant \cdots$ be the eigenvalues of the operator
  \[ H \assign - \partial^2_x + 2 \tmmathbf{c} (x) \cdot \partial_x + \lp{1}
     \partial_x \cdot \tmmathbf{c} - c_0 (x) \rp{1} \]
  acting on $\mathbbm{C}^d$-valued functions defined on the torus $(
  \mathbbm{R} / \mathbbm{Z})^{\nu}$. \ For each $q \in \mathbbm{Z}^{\nu}$,
  there exists a function $\hat{w} (q, .)$ analytic on $\mathbbm{C}$
  satisfying
  \begin{itemize}
    \item For every $\kappa > 0$, there exists $T, K > 0$ such that, for every
    $\text{$q \in \mathbbm{Z}^{\nu}$ and $\tau \in \parabole{\kappa}$,}$
    \[ | \hat{w} (q, \tau) | \leqslant K e^{\frac{|q|}{T}} e^{\frac{| \tau
       |}{T}} . \]
    \item There exists $\tilde{T}$ such that for every $t \in \nev{\tilde{T}}$
    \begin{equation}
      \label{venus22} \sum_{n = 1}^{+ \infty} e^{- \Lambda_n t} = (4 \pi t)^{-
      \nu / 2} \sum_{q \in \mathbbm{Z}^{\nu}} e^{- \frac{q^2}{4 t}} \int_0^{+
      \infty} e^{- \frac{\tau}{t}} \hat{w} (q, \tau) \frac{d \tau}{t} .
    \end{equation}
  \end{itemize}
\end{corollary}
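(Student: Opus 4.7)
The plan is to apply Theorem \ref{venus9} to the Fourier series data, build $\hat w$ by integrating a trace of $\hat v$ over the fundamental domain, and then recover (\ref{venus22}) from the standard periodization of the heat kernel together with the Laplace representation of Remark \ref{venus15}.

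First I would encode the periodic coefficients as $c_j(x) = \int e^{ix\cdot\xi}\,d\lambda_j(\xi)$ with $\lambda_j := \sum_{q \in \mathbbm{Z}^\nu} c_{j,q}\,\delta_{2\pi q}$. The hypothesis $\sum_q e^{4\pi^2 \varepsilon q^2}|c_{j,q}| < \infty$ is exactly condition (\ref{venus10}), so Theorem \ref{venus9} produces an entire function $\hat v(\tau, x, y)$ satisfying (\ref{venus14}). I then define
\[ \hat w(q, \tau) := \int_{(\mathbbm{R}/\mathbbm{Z})^\nu} \mathrm{tr}\,\hat v(\tau, x, x+q)\,dx, \qquad q \in \mathbbm{Z}^\nu,\ \tau \in \mathbbm{C}, \]
which is entire in $\tau$ by dominated convergence. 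To obtain the bound in the first bullet, I would apply (\ref{venus14}) with any fixed $R > 0$, real $x$ in the fundamental domain, and $y = x+q$; this yields $|\hat v(\tau, x, x+q)| \leq e^{C|q|}e^{C(|\tau|+2|\tau|^{1/2})}$ on $\parabole{\kappa}$. Using the elementary inequality $2|\tau|^{1/2} \leq |\tau| + 1$ and absorbing the matrix trace into a factor $d$ then gives the claimed estimate with, say, $T := 1/(2C)$ and $K := d\,e^{C}$.

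For the Poisson identity I would first observe that the $\mathbbm{Z}^\nu$-periodicity of $c_0,\ldots,c_\nu$ makes $H$ commute with lattice translations on $\mathbbm{R}^\nu$, so the associated heat kernel satisfies $p_t^{\mathbbm{R}^\nu}(x+q, y+q) = p_t^{\mathbbm{R}^\nu}(x, y)$ and the torus heat kernel is produced by periodization,
\[ p_t^{\tmop{tor}}(x, y) = \sum_{q \in \mathbbm{Z}^\nu} p_t^{\mathbbm{R}^\nu}(x, y+q). \]
Taking the trace, integrating over the fundamental domain and inserting the factorization $p_t = (4\pi t)^{-\nu/2}e^{-(x-y)^2/(4t)} v_t$ will give
\[ \sum_{n=1}^{+\infty} e^{-\Lambda_n t} = (4\pi t)^{-\nu/2}\sum_{q \in \mathbbm{Z}^\nu} e^{-q^2/(4t)} \int_{(\mathbbm{R}/\mathbbm{Z})^\nu} \mathrm{tr}\, v_t(x, x+q)\,dx. \]
By Remark \ref{venus15}, for $t$ with $\re(1/t)$ sufficiently large the Laplace formula $v_t(x, y) = \int_0^{+\infty} e^{-\tau/t}\,\hat v(\tau, x, y)\,d\tau/t$ holds. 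Substituting it and applying Fubini will turn the inner integral into $\int_0^{+\infty} e^{-\tau/t}\hat w(q,\tau)\,d\tau/t$ and produce the right-hand side of (\ref{venus22}).

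The delicate step is locating a single disk $\nev{\tilde T}$ on which everything converges. For $t \in \nev{\tilde T}$ one has $\re(1/t) > 1/\tilde T$, so choosing $\tilde T < T$ makes $|e^{-\tau/t}\hat w(q,\tau)| \leq Ke^{|q|/T}e^{-\tau(1/\tilde T - 1/T)}$, which is integrable in $\tau$ uniformly in $q$; the remaining factor $|e^{-q^2/(4t)}| = e^{-q^2\re(1/t)/4}$ then swamps $e^{|q|/T}$ once $\tilde T$ is small enough, yielding absolute convergence of the $q$-sum and validity of Fubini. Both sides of (\ref{venus22}) are thus holomorphic on $\nev{\tilde T}$, and they coincide for small positive real $t$ by the derivation above together with the classical spectral identity $\sum_n e^{-\Lambda_n t} = \int_{(\mathbbm{R}/\mathbbm{Z})^\nu} \mathrm{tr}\, p_t^{\tmop{tor}}(x, x)\,dx$; hence they agree on all of $\nev{\tilde T}$ by analytic continuation. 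I expect the main obstacle to lie precisely in this bookkeeping, i.e., in choosing a single $\tilde T$ so that the Laplace representation, the periodization of the heat kernel, and the two nested series all converge simultaneously, which is why the cleanest route is to establish every identity for small real $t > 0$ first and only then invoke analyticity.
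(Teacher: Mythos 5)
The paper itself omits the proof of this corollary, noting only that it is ``similar to the proof of Corollary 3.3 in [Ha4]'', so there is no in-paper argument to compare against. Your proposal is correct and implements what is surely the intended route: encode the Fourier data as the atomic measures $\lambda_j=\sum_q c_{j,q}\,\delta_{2\pi q}$, check that the hypothesis $\sum_q e^{4\pi^2\varepsilon q^2}|c_{j,q}|<\infty$ is exactly condition (\ref{venus10}), apply Theorem \ref{venus9} to get the entire Borel transform $\hat v$, define $\hat w(q,\tau)$ as the traced fundamental-domain integral of $\hat v(\tau,x,x+q)$, and deduce (\ref{venus22}) from the periodization $p_t^{\mathrm{tor}}(x,y)=\sum_q p_t^{\mathbbm{R}^\nu}(x,y+q)$ together with the Laplace representation and analytic continuation from small real $t$. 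The estimate for $\hat w$ via $2|\tau|^{1/2}\le|\tau|+1$ and the choice $T=1/(2C)$, $K=d\,e^C$ is sound, as is the Fubini and $q$-summation bookkeeping on $\nev{\tilde T}$ for $\tilde T$ small. Two very minor points worth making explicit in a polished write-up: (i) one should cite the footnote in Theorem \ref{venus9} (uniqueness of the analytic solution of (\ref{venus12})) to identify the deformation-formula solution with the actual heat kernel $p_t^{\mathbbm{R}^\nu}$, and (ii) the convergence of the periodization sum for small real $t$ should be noted as following from the Gaussian factor $e^{-q^2/(4t)}$ and the bound on $v$ from Proposition \ref{aminah4}. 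Neither affects the validity of the argument.
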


\begin{remark}
  Let us consider the usual Hermitian product on $L^2 \lp{1} \text{$(
  \mathbbm{R} / \mathbbm{Z})^{\nu}, \mathbbm{C}^d \rp{1}$}$. By our
  assumptions on the coefficients $c_{j, q}$, $H$ can be viewed as a
  self-adjoint operator with compact resolvent. The proof of Corollary
  \ref{venus18} is similar to the proof of Corollary 3.3 in [Ha4] and is
  therefore omitted.
\end{remark}

\begin{remark}
  {\tmdummy}
  
  \begin{itemize}
    \item Formula (\ref{venus22}) can be viewed as a Poisson formula (see also
    [Ha4]): for each $q \in \mathbbm{Z}^{\nu}$, there are numbers $a_{0, q},
    a_{1, q}, \ldots \in \mathbbm{C}$ and functions $R_{0, q}, R_{1, q},
    \ldots \in \text{$\mathcal{A} ( \nev{\tilde{T}})$}$ such that, for every
    $r \geqslant 0$ and $t \in \nev{\tilde{T}}$
    \begin{equation}
      \label{barbara18} \sum_{n = 1}^{+ \infty} e^{- \Lambda_n t} = (4 \pi
      t)^{- \nu / 2} \sum_{q \in \mathbbm{Z}^{\nu}} e^{- \frac{q^2}{4 t}}
      \lp{1} a_{0, q} + a_{1, q} t + \cdots + a_{r - 1, q} t^{r - 1} + R_{r,
      q} (t) \rp{1},
    \end{equation}
    and for $\kappa$ small enough, there exist $M > 0$ such that
    \[ |R_{r, q} (t) | \leqslant M \frac{r!}{\kappa^r} |t|^r, \]
    for every $r \geqslant 0, q \in \mathbbm{Z}^{\nu}, t \in
    \nev{\tilde{T}} .$
    
    \item One can easily check, using the deformation formula (\ref{aminah5}),
    that
    \[ a_{0, q} = \int_{[0, 1]^{\nu}} \tmop{Tr} \lp{3} \tmop{Texp} \lp{2}
       \int_0^1 q \cdot c (x + s q) d s \rp{2} \rp{3} d x. \]
    with the following definition for the ordered exponential
    \[ \mathbbm{1} + \int_0^1 q \cdot c (x + s_1 q) d s_1 + \int_{0 < s_1 <
       s_2 < 1} q \cdot c (x + s_2 q) q \cdot c (x + s_1 q) d s_1 d s_2 +
       \cdots \]
  \end{itemize}
\end{remark}

\section{Proof of the theorems}

Our proofs use the following deformation formula which gives a representation
of the heat kernel. First, we need

\begin{notation}
  \label{aminah2}Let $\lambda_0, \ldots, \lambda_{\nu}$ be measures on
  $\mathbbm{R}^{\nu}$ with values in a complex finite dimensional space of
  square matrices. Let us assume that for every $R > 0$ and $q \in \{0,
  \ldots, \nu\}$
  \begin{equation}
    \label{aminah3} \int_{\mathbbm{R}^{\nu}} \exp (R| \xi |) d | \lambda_q |
    (\xi) < + \infty .
  \end{equation}
  Let $\tmmathbf{c}$ and $c_0$ as in Theorem \ref{venus9}. For t,$x_1, \ldots,
  x_n, y_1, \ldots y_n \in \mathbbm{C}^{2 n + 1}$ and $0 < s_1 < \cdots < s_n
  < 1$, let
  \[ V_n (t, x_1, y_1, \ldots) \assign \lb{2} (x_n - y_n) \cdot \tmmathbf{c}
     \lp{1} y_n + s_n (x_n - y_n) \rp{1} + t c_0 \lp{1} y_n + s_n (x_n - y_n)
     \rp{1} \rb{2} \cdots \]
  \[ \lb{2} (x_1 - y_1) \cdot \tmmathbf{c} \lp{1} y_1 + s_1 (x_1 - y_1) \rp{1}
     + t c_0 \lp{1} y_1 + s_1 (x_1 - y_1) \rp{1} \rb{2} . \]
  Let $P_n$ be the operator acting on $\mathcal{A}(\mathbbm{C}^{2 \nu n})$
  defined by
  \[ P_n : = \sum_{j, k = 1}^n \partial_{x_{j \wedge k}} \cdot \partial_{y_{j
     \vee k}}, \]
  where $j \wedge k \assign \min (j, k)$ and $j \vee k \assign \max (j, k)$.
  Finally, for $\xi = (\xi_1, \ldots, \xi_n) \in \mathbbm{R}^{\nu n}$, let
  \[ | \xi |_1 \assign | \xi_1 | + \cdots + | \xi_n |, \]
  and let $\| \lambda \|$ be the measure on $\mathbbm{R}^{\nu n}$ defined by
  \[ d^{\nu n} || \lambda || (\xi_{}) = d| \lambda |^{\star} (\xi_n) \cdots d|
     \lambda |^{\star} (\xi_1) . \text{} \]
\end{notation}

\begin{proposition}
  \label{aminah4}Let $\lambda_0, \ldots, \lambda_{\nu}$ be as in Notation
  \ref{aminah2}. Let $v$ be defined by
  \begin{equation}
    \label{aminah4.5} v = \mathbbm{1} + \sum_{n \geqslant 1} v_n
  \end{equation}
  where
  \begin{equation}
    \label{aminah5} v_n (t, x, y) \assign \int_{0 < s_1 < \cdots < s_n < 1}
    \lb{1} \exp (t P_n) V_n (t, x_1, y_1, \ldots) \rb{1}
    \ve{1}_{\tmscript{\begin{array}{l}
      x_1 = x, y_1 = y\\
      \ldots\\
      x_n = x, y_n = y
    \end{array}}} d^n s.
  \end{equation}
  Let $R > 0$ and let $\Upsilon_R \assign \{(x, y) \in \mathbbm{C}^{2 \nu} | |
  \im x| < R, | \im y| < R\}$. Let
  \[ T_R \assign \left( \int_{\mathbbm{R}^{\nu}} \exp \lp{1} (1 + R) | \xi |
     \rp{1} d| \lambda |^{\star} (\xi) \right)^{- 1} . \]
  Then $v \in \mathcal{A}( \demidisque{T_R} \times \Upsilon_R)$. The function
  $u \assign (4 \pi t)^{- \nu / 2} e^{- \frac{(x - y)^2}{4 t}} v$ $\text{}$is
  a solution of (\ref{venus12}).
\end{proposition}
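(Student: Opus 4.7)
The plan is to split the proof into two parts: (A) showing that the series $v = \mathbbm{1} + \sum_{n \geq 1} v_n$ converges to an analytic function on $\demidisque{T_R} \times \Upsilon_R$, and (B) showing that $u := (4 \pi t)^{-\nu/2} e^{-(x-y)^2/(4t)} v$ solves the parabolic system (\ref{venus12}).

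For (A), the key move is to substitute the Fourier representations $c_q(z) = \int e^{iz \cdot \xi} d\lambda_q(\xi)$, which by (\ref{aminah3}) extend analytically to $\{z \in \mathbbm{C}^\nu : |\im z| < R\}$, into each bracket of $V_n$. The crucial observation is that the plane-wave product $\prod_j e^{i \xi_j \cdot (s_j x_j + (1 - s_j) y_j)}$ is a joint eigenfunction of $P_n$: since $\partial_{x_{j'}}$ pulls down $i s_{j'} \xi_{j'}$ and $\partial_{y_{j'}}$ pulls down $i (1 - s_{j'}) \xi_{j'}$, the eigenvalue is $-\sum_{j,k} s_{j \wedge k} (1 - s_{j \vee k}) \xi_{j \wedge k} \cdot \xi_{j \vee k}$, so $\exp(tP_n)$ acts simply by exponentiating $t$ times this eigenvalue. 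The $(x_j - y_j)$ prefactors in the $\boldsymbol{c}$-terms of $V_n$ are handled either by expanding $V_n$ into its $2^n$ type-$0$/type-$\star$ monomials or by a Cauchy-type derivative bound on $c_q$, which is the origin of the extra $|\xi|$ in the $(1 + R)|\xi|$ exponent of $T_R$.

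The estimate then uses three observations: on $\Upsilon_R$ the intermediate points $z_j = s_j x + (1 - s_j) y$ satisfy $|\im z_j| \leq R$, giving $|e^{i \xi_j \cdot z_j}| \leq e^{R |\xi_j|}$; the matrix $G_{jk} := s_{j \wedge k}(1 - s_{j \vee k})$ is the Dirichlet Green's kernel on $[0,1]$ and hence positive semidefinite, so the Gaussian factor from $\exp(tP_n)$ has modulus at most $1$ when $\re t > 0$; and the corrected measure on $\mathbbm{R}^{\nu n}$ is dominated by the product measure $d\|\lambda\|$ with an extra weight $e^{|\xi|_1}$. Combining these with the simplex volume $1/n!$ yields a bound of the form
\[
|v_n(t, x, y)| \leq \frac{|t|^n}{n!} e^{c |x - y|} \left( \int_{\mathbbm{R}^\nu} e^{(1 + R) |\xi|} d|\lambda|^\star(\xi) \right)^n,
\]
uniformly on compacts of $\demidisque{T_R} \times \Upsilon_R$. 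This establishes the absolute convergence of the series for $|t| < T_R$ and, via uniform convergence on compacts, the analyticity of $v$.

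For (B), I would recognize (\ref{aminah5}) as an explicit reshuffling of the Dyson/Duhamel expansion of $e^{-tH}$ around the free heat semigroup $e^{t \partial_x^2}$. Writing $H = -\partial_x^2 + W$ with $W = 2 \boldsymbol{c} \cdot \partial_x + \partial_x \cdot \boldsymbol{c} - c_0$ and iterating Duhamel expresses $p_t(x, y)$ as a sum over time-ordered simplices of Gaussian convolutions of $W$ against the free heat kernel; the joint law of the intermediate positions is Gaussian with covariance $2t \cdot G_{jk}$, which is precisely what $\exp(t P_n)$ encodes by Wick's theorem. The $\partial_x$ inside $W$, after Gaussian integration by parts, produces exactly the $(x_j - y_j) \cdot \boldsymbol{c}$ prefactors of $V_n$, and the time ordering $0 < s_1 < \cdots < s_n < 1$ enforces the non-commutative matrix product order. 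The initial condition $v|_{t = 0^+} = \mathbbm{1}$ is automatic since every $V_n$ with $n \geq 1$ contains an $(x_j - y_j)$ or $t$ factor in each bracket and therefore vanishes at $t = 0$, $x = y$.

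The main obstacle is exactly this Dyson-to-deformation identification, which is the vector-potential analog of the Wiener/Wick calculation in [Ha4, Appendix]. Unlike the scalar case where $W = -c_0$ is multiplication, the presence of $\partial_x$ in $W$ forces additional contractions with the Gaussian propagator after each intermediate integration, which is what generates the $(x_j - y_j)$ prefactors; simultaneously the matrix-valued $\boldsymbol{c}$'s do not commute, so the combinatorics of pairings must be tracked compatibly with the time ordering. Once this algebraic identification is done, uniqueness of analytic solutions of (\ref{venus12}) finishes the proposition.
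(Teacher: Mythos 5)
Your proof splits the claim into convergence/analyticity (A) and PDE verification (B). Part (A) is in the same spirit as the paper (Fourier representation of $c_q$, the plane wave being an eigenfunction of $P_n$, the positive semidefiniteness of the $s(1-s)$ kernel), but the final bound you state is wrong: you claim $|v_n| \leqslant \frac{|t|^n}{n!}\, e^{c|x-y|} A^n$, with an overall $|t|^n/n!$ prefactor. This cannot hold, and it is in fact inconsistent with your own later observation that the brackets of $V_n$ contain either a $t$ \emph{or} an $(x_j - y_j)$ factor: when $x \neq y$, the terms with only $(x-y)\cdot\tmmathbf{c}$ brackets survive at $t=0$, so $v_n(0,x,y) \neq 0$. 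The correct structure, obtained in the paper via Lemma \ref{aminah11}, is a polynomial in $t$ of degree $\leqslant n$ whose coefficients are indexed by the number $p$ of $\partial_x\cdot\partial_y$ contractions and the number $q$ of $c_0$-insertions:
\[
\int_{0<s_1<\cdots<s_n<1}\int_{\mathbbm{R}^{\nu n}} |d^{\nu n} W_n| \;\leqslant\; A^n \sum_{2p+q\leqslant n} \frac{1}{p!}\,\frac{|x-y|^{n-2p-q}}{(n-2p-q)!}\,|t|^{p+q},
\]
and the $n!$ from the combinatorics of the nilpotent expansion of $e^{t\tilde P_n}$ exactly cancels the $1/n!$ simplex volume rather than combining with it. Summing over $n$ produces a geometric factor in $A|t|$ (from the $q$-index), which is precisely why the result is stated only for $|t| < T_R$ and not for all $t\in\demiplanb$.

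Part (B) is a genuine gap. You propose to recognize (\ref{aminah5}) as a Dyson/Duhamel expansion of $e^{-tH}$ around the free semigroup and to appeal to Wick's theorem plus uniqueness, but you yourself flag the ``Dyson-to-deformation identification'' as the main obstacle and do not carry it out; the paper explicitly remarks (in the introduction) that in the vector-potential case this Wiener/Wick connection ``seems more difficult to exhibit,'' which is exactly why it takes a different route. The paper's proof of (B) is a direct, self-contained PDE verification: it shows $v_n$ solves the recursive conjugate system (\ref{aminah36}) by rewriting the simplex integral over $0<s_1<\cdots<s_n<t$ (via $V_n^{\natural}$), splitting the action of $\partial_t + \frac{x-y}{t}\cdot\partial_x$ into a domain-derivative term $I$ (which produces the lower-order $v_{n-1}$ terms, using $P_n = \partial_{x_n}\cdot\partial_{y_n} + 2(\sum_{j<n}\partial_{x_j})\cdot\partial_{y_n} + P_{n-1}$) and an integrand-derivative term $II = \partial_x^2 v_n$, the latter handled by the commutator identity of Remark \ref{aminah10.2}. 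Replacing this induction by the probabilistic identification would require constructing and justifying the gauge-covariant Feynman--Kac representation with non-commuting matrix $\tmmathbf{c}$'s, which is not done here and would be substantially more work.
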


Let us give another useful expression of the function $v$. Let $\bar{z} = (
\bar{z}_1, \ldots, \bar{z}_n) \in \mathbbm{C}^{\nu n}$ and $\xi = (\xi_1,
\ldots, \xi_n) \in \mathbbm{R}^{\nu n}$. Let $\mu_{\bar{z}}$ be the measure
defined on $\mathbbm{R}^{\nu n}$ by
\[ d^{\nu n} \mu_{\bar{z}} (\xi) = \lp{1} \bar{z}_n \cdot d
   \tmmathbf{\lambda}_{} (\xi_n) + t d \lambda_0 (\xi_n) \rp{1} \cdots \lp{1}
   \bar{z}_1 \cdot d \tmmathbf{\lambda} (\xi_1) + t d \lambda_0 (\xi_1) \rp{1}
   . \]
Let $s = (s_1, \ldots, s_n)$ such that $0 < s_1 < \cdots < s_n < 1$. Let
$\tilde{P}_n$ be the operator defined by
\[ \tilde{P}_n : = i \sum^n_{j, k = 1} (1 - s_{j \vee k}) \xi_{j \vee k} \cdot
   \partial_{\bar{z}_{j \wedge k}} - s_{j \wedge k} \xi_{j \wedge k} \cdot
   \partial_{\bar{z}_{j \vee k}} - 2 \sum_{1 \leqslant j < k \leqslant n}
   \partial_{\bar{z}_j} \cdot \partial_{\bar{z}_k} . \]
Let
\[ s (1 - s) \cdot_n \xi \otimes \xi : = \sum^n_{j, k = 1} s_{j \wedge k} (1 -
   s_{j \vee k}) \xi_j \cdot \xi_k, \]
\[ \lp{1} y + s (x - y) \rp{1} \cdot \xi : = \lp{1} y + s_1 (x - y) \rp{1}
   \cdot \xi_1 + \cdots + \lp{1} y + s_n (x - y) \rp{1} \cdot \xi_n . \]
\begin{remark}
  \label{aminah9} The following identity holds
  \[ v_n = \int_{0 < s_1 < \cdots < s_n < 1} \int_{\mathbbm{R}^{\nu n}} e^{i
     \lp{1} y + s (x - y) \rp{1} \cdot \xi} \times \]
  
  \begin{equation}
    \label{aminah10} e^{- t s (1 - s) \cdot_n \xi \otimes \xi} \lb{1} e^{t
    \tilde{P}_n} d^{\nu n} \mu_{\bar{z}} (\xi) \rb{1}
    \ve{1}_{\tmscript{\begin{array}{l}
      \bar{z}_1 = x - y\\
      \ldots\\
      \bar{z}_n = x - y
    \end{array}}} d^n s .
  \end{equation}
\end{remark}

Remark \ref{aminah9} and Proposition \ref{aminah4} will be proved together; we
shall need the following remarks and lemma. Remark \ref{aminah10.2} will
partially explain the shape of the operator $P_n$ in (\ref{aminah5}). Remark
\ref{aminah10.4} (cf. Lemma 4.1 in [Ha4]) is crucial for the proof of the
convergence of the integrals in (\ref{aminah10}) and for Theorem \ref{venus9}.

\begin{remark}
  \label{aminah10.2}{\tmdummy}
  
  \[ \lb{2} \partial_t + \sum_{j = 1}^n \frac{x_j - y_j}{t} \cdot
     \partial_{x_j}, t P_n \rb{2} = \lp{2} \sum_{j = 1}^n \partial_{x_j}
     \rp{2} \cdot \lp{2} \sum_{j = 1}^n \partial_{x_j} \rp{2} . \]
\end{remark}

\begin{remark}
  \label{aminah10.4}Let $\xi \in \mathbbm{R}^{\nu n}$ and $0 < s_1 < \cdots <
  s_n < 1$. Then
  \begin{equation}
    \label{aminah10.5} 0 \leqslant s (1 - s) \cdot_n \xi \otimes \xi \leqslant
    n (\xi_1^2 + \cdots + \xi_n^2) .
  \end{equation}
\end{remark}

\begin{lemma}
  \label{aminah11}Let $\lambda_0, \ldots, \lambda_{\nu}$ be as in \ Notation
  \ref{aminah2}. Then, for $m, n \in \mathbbm{N}$, $m \leqslant n$, \ there
  exist measures $\mu_{n, m}$ defined on $\mathbbm{R}^{\nu n}$ which are
  polynomial with respect to $(x, y)$ and which do not depend on $t$ such that
  \begin{equation}
    \label{aminah14} \lb{1} e^{t \tilde{P}_n} d^{\nu n} \mu_{\bar{z}} (\xi)
    \rb{1} \ve{1}_{\tmscript{\begin{array}{l}
      \bar{z}_1 = x - y\\
      \ldots\\
      \bar{z}_n = x - y
    \end{array}}} = \sum_{m \leqslant n} t^m d^{\nu n} \mu_{n, m} (\xi)
  \end{equation}
  where
  \begin{equation}
    \label{aminah16} |d^{\nu n} \mu_{n, m} (\xi) | \leqslant n!^{} e^{| \xi
    |_1} \times \sum_{\tmscript{\begin{array}{c}
      p + q = m\\
      2 p + q \leqslant n\\
      p, q \geqslant 0
    \end{array}}} \frac{1}{p!} \frac{|x - y|^{n - 2 p - q}}{(n - 2 p - q) !}
    d^{\nu n} \| \lambda || (\xi_{}) .
  \end{equation}
\end{lemma}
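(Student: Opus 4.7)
The plan is to expand $e^{t\tilde P_n}$ in powers of $t$, applied to the polynomial-in-$\bar z$ density of $\mu_{\bar z}$, and then read off the coefficient of $t^m$ after the substitution $\bar z_j=x-y$. The key observation is the splitting $\tilde P_n=A+B$ into a first-order part $A=i\sum_{j,k}[(1-s_{j\vee k})\xi_{j\vee k}\cdot\partial_{\bar z_{j\wedge k}} - s_{j\wedge k}\xi_{j\wedge k}\cdot\partial_{\bar z_{j\vee k}}]$ and the second-order part $B=-2\sum_{j<k}\partial_{\bar z_j}\cdot\partial_{\bar z_k}$; both have coefficients independent of $\bar z$, so $[A,B]=0$ and $e^{t\tilde P_n}=e^{tA}e^{tB}=\sum_{p,q\ge 0}\frac{t^{p+q}}{p!\,q!}A^qB^p$. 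Since the density of $\mu_{\bar z}$ has degree at most $n$ in $\bar z$ (linear in each $\bar z_j$), only finitely many terms of this series survive, which already yields the polynomial form (\ref{aminah14}) of degree $\le n$ in $t$.

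Next, I would expand
$d\mu_{\bar z}(\xi)=\sum_{J\subseteq\{1,\ldots,n\}}t^{|J^c|}\prod_{j\in J}(\bar z_j\cdot d\boldsymbol\lambda(\xi_j))\prod_{j\in J^c}d\lambda_0(\xi_j)$
and apply $A^qB^p$ termwise. Operator $B^p$ produces a sum over unordered pairings of a $2p$-subset of $J$: each pair $(j,k)$ contributes a Wick-type contraction $-2\,d\boldsymbol\lambda(\xi_j)\cdot d\boldsymbol\lambda(\xi_k)$ of norm $\le 2\,d|\boldsymbol\lambda|(\xi_j)d|\boldsymbol\lambda|(\xi_k)$ by multiplicativity of the matrix norm. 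Operator $A^q$ then selects an unordered $q$-subset of the remaining indices and replaces each chosen $\bar z_b$ by a coefficient $\alpha_b$ read off from $A$; a direct inspection using $0<s_1<\cdots<s_n<1$ gives $|\alpha_b|\le|\xi|_1$. The $\ell-2p-q$ unaffected $\bar z_j$'s (where $\ell=|J|$) become $x-y$, contributing $|x-y|^{\ell-2p-q}$. All measure factors $d|\boldsymbol\lambda|,d|\lambda_0|$ are dominated by $d|\lambda|^\star$, so the full $\nu n$-dimensional measure is majorized by $d\|\lambda\|(\xi)$.

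Collecting all combinatorial factors---$\binom{n}{\ell}$ for the choice of $J$, together with $\binom{\ell}{2p}\frac{(2p)!}{2^p p!}\binom{\ell-2p}{q}$ for the internal structure, and the cancellation of $p!,q!$ from $e^{tB}$ and $e^{tA}$ against the orderings that produce the same pairing and the same $q$-subset---the contribution for given $(p,q)$ simplifies to $\frac{n!}{(n-\ell)!\,p!\,q!\,(\ell-2p-q)!}|\xi|_1^q|x-y|^{\ell-2p-q}d\|\lambda\|$. The total power of $t$ is $t^{n-\ell+p+q}=t^m$, which pins $\ell=n-m+p+q$, giving $n-\ell=m-p-q$ and $\ell-2p-q=n-m-p$. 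Hence
\[
|d\mu_{n,m}(\xi)|\le\sum_{\substack{p,q\ge 0\\p\le n-m,\,q\le m-p}}\frac{n!\,|\xi|_1^q\,|x-y|^{n-m-p}}{(m-p-q)!\,p!\,q!\,(n-m-p)!}\,d\|\lambda\|(\xi).
\]
Using $1/(m-p-q)!\le 1$ and $\sum_{q\ge 0}|\xi|_1^q/q!\le e^{|\xi|_1}$ yields exactly (\ref{aminah16}). The polynomial dependence of $\mu_{n,m}$ on $(x,y)$ and its independence of $t$ are built into this construction.

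The main obstacle is the combinatorial bookkeeping: correctly tracking which indices of $J$ are paired by $B$, differentiated by $A$, or left to survive, and cleanly matching the factorials $1/p!\,q!$ from the exponential expansion against those from enumerating pairings and differentiation sites. A preliminary check---confirming $[A,B]=0$---is immediate from the fact that all coefficients of $\tilde P_n$ are independent of $\bar z$.
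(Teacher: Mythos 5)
Your proof is correct and follows essentially the same route as the paper's. The one cosmetic difference is that you package the expansion of $e^{t\tilde P_n}$ via the commuting split $\tilde P_n = A+B$ and the factorization $e^{t\tilde P_n}=e^{tA}e^{tB}$, whereas the paper expands $\tilde P_n^r/r!$ directly and invokes the nilpotency relations $a_\gamma^2=A_{\alpha,\beta}^2=0$, etc., on multilinear monomials to arrive at its identity (\ref{aminah17}); the resulting sum over disjoint pairs $I_1,\ldots,I_p$, a differentiation set $J$ (your $q$-subset), and the $\lambda_0$-set $K$ (your $J^c$) is identical, and your combinatorial count $\frac{n!}{(n-\ell)!\,p!\,q!\,(\ell-2p-q)!}$ with $\ell=n-m+p+q$ reproduces the paper's $\frac{n!}{p!\,j!\,k!\,(n-2p-j-k)!}$ after the change of labels $(j,k)\mapsto(q,\,n-\ell)$. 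Your final estimate, dropping $1/(m-p-q)!\le 1$ and summing $\sum_q |\xi|_1^q/q!\le e^{|\xi|_1}$, is a slight variant of the paper's inequality (\ref{aminah22}) but lands on the same bound (\ref{aminah16}). The only step you assert without detail is $|\alpha_b|\le|\xi|_1$, which corresponds to the paper's $|b_\gamma(\xi)\cdot d\tmmathbf{\lambda}(\xi_\gamma)|\le|\xi|_1\,d|\tmmathbf{\lambda}|(\xi_\gamma)$ and uses $0<s_j<1$; spelling this out would make the argument self-contained, but the claim is the same as the paper's.
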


\begin{proof}
  Let
  \[ b_{\gamma} (\xi) \assign - i \sum_{1 \leqslant j \leqslant \gamma} s_j
     \xi_j + i \sum_{\gamma \leqslant j \leqslant n} (1 - s_j) \xi_j . \]
  Then
  \[ \tilde{P}_n = \sum_{\gamma = 1}^n b_{\gamma} (\xi) \cdot
     \partial_{\bar{z}_{\gamma}} - 2 \sum_{1 \leqslant \alpha < \beta
     \leqslant n} \partial_{\bar{z}_{\alpha}} \cdot \partial_{\bar{z}_{\beta}}
     . \]
  For $\alpha, \beta, \gamma = 1, \ldots, n$, $\alpha < \beta$, let
  $a_{\gamma} \assign b_{\gamma} (\xi) \cdot \partial_{\bar{z}_{\gamma}}$ and
  $A_{\alpha, \beta} \assign \partial_{\bar{z}_{\alpha}} \cdot
  \partial_{\bar{z}_{\beta}}$ be the operators acting on linear combinations
  of monomials such as $\prod_{j \in J} \bar{z}_j$ where $J \subset \{1,
  \ldots, n\}$. Since $a_{\gamma}^2 = A^2_{\alpha, \beta} = 0$, $a_{\gamma}
  A_{\alpha, \beta} = 0$ if $\gamma \in \{\alpha, \beta\}$ and $A_{\alpha,
  \beta} A_{\alpha', \beta'} = 0$ if $\{\alpha, \beta\} \cap \{\alpha', \beta'
  \} \neq \varnothing$, one gets
  \begin{equation}
    \label{aminah17} \frac{1}{r!} \lp{1} \sum_{\gamma = 1}^n a_{\gamma} - 2
    \sum_{1 \leqslant \alpha < \beta \leqslant n} A_{\alpha, \beta} \rp{1}^r =
    \sum_{I_1, \ldots, I_p, J} (- 2)^p \prod_{\gamma \in J} a_{\gamma}
    \prod_{k = 1}^p A_{\min (I_k), \max (I_k)} .
  \end{equation}
  Here, $p \geqslant 0$ and the sum runs over all collections of pairwise
  disjoint subsets $I_1, \ldots, I_p, J$ of $\{1, \ldots, n\}$, such that $|
  I_1 | = \cdots = | I_p | = 2$ and $| J| + p = r$, without ordering on $I_1,
  \ldots, I_p$.
  
  Expanding $e^{t \tilde{P}_n}$ using (\ref{aminah17}) implies that the left
  hand side of (\ref{aminah14}) is equal to
  \begin{equation}
    \label{aminah18} \sum_{I_1, \ldots, I_p, J, K} (- 2 t)^p t^{|J| + |K|}
    \mathcal{P}_{\{I\}, J, K}
  \end{equation}
  where
  \[ \mathcal{P}_{\{I\}, J, K} \assign \prod_{\{\alpha, \beta\}, \gamma,
     \delta, \varepsilon} \]
  \begin{equation}
    \label{aminah20} \lbc{2} \lp{1} d \tmmathbf{\lambda} (\xi_{\alpha}) \cdot
    d \tmmathbf{\lambda} (\xi_{\beta}) \rp{1} \lp{1} b_{\gamma} (\xi) \cdot d
    \tmmathbf{\lambda} (\xi_{\gamma}) \rp{1} \lp{1} (x - y) \cdot d
    \tmmathbf{\lambda} (\xi_{\delta}) \rp{1} d \lambda_0 (\xi_{\varepsilon})
    \srbc{2}{}{>} .
  \end{equation}
  Here we use the following convention:
  
  First, $p \in \mathbbm{N}$, $I_1, \ldots, I_p, J, K$ are subsets of $\{1,
  \ldots, n\}$ such that $|I_1 | = \cdots = |I_p | = 2$ and $I_1, \ldots, I_p,
  J, K$ are pairwise disjoint (hence $2 p + |J| + |K| \leqslant n$). The sum
  (\ref{aminah18}) runs over all such subsets without ordering as far as $I_1,
  \ldots, I_p$ are concerned.
  
  Second, if $I_1, \ldots, I_p, J, K$ satisfy the previous assumptions, the
  product in (\ref{aminah20}) runs over all $\{\alpha, \beta\}= I_1, \ldots,
  I_p$, $\gamma \in J$, $\varepsilon \in K$ and $\delta$ lying in the
  complementary of $I_1 \cup \cdots \cup I_p \cup J \cup K$.
  
  Third, the symbol $\lbc{0} \cdot \srbc{0}{}{>}$ means that the terms of the
  product are ordered. For instance, if $n = 6$, $p = 1$, $I_1 =\{1, 5\}$, $J
  =\{3\}$ and $K =\{2, 4\}$,
  \[ \mathcal{P}_{\{I\}, J, K} = \sum_{\ell = 1}^{\nu} (x - y) \cdot d
     \tmmathbf{\lambda} (\xi_6) d \lambda_{\ell} (\xi_5) d \lambda_0 (\xi_4)
     b_3 (\xi_{}) \cdot d \tmmathbf{\lambda} (\xi_3) d \lambda_0 (\xi_2) d
     \lambda_{\ell} (\xi_1) . \]
  Then (\ref{aminah14}) holds with
  \[ d^{\nu n} \mu_{n, m} (\xi) \assign \sum_{\tmscript{\begin{array}{c}
       I_1, \ldots, I_p, J, K\\
       p + |J| + |K| = m
     \end{array}}} (- 2)^p \mathcal{P}_{\{I\}, J, K} . \]
  Let $p, j, k \in \mathbbm{N}$ such that $2 p + j + k \leqslant n$. There are
  $\frac{n!}{2^p p!j!k! (n - 2 p - j - k) !}$ subsets $I_1, \ldots, I_p, J, K$
  such that $|J| = j$ and $|K| = k$ in (\ref{aminah18}). Moreover
  \[ |b_{\gamma} (\xi) \cdot d \tmmathbf{\lambda} (\xi_{\gamma}) | \leqslant |
     \xi_{} |_1 d \tmmathbf{| \lambda |} (\xi_{\gamma}) . \]
  Then
  \[ |d^{\nu n} \mu_{n, m} (\xi) | \leqslant \sum_{\tmscript{\begin{array}{c}
       p + j + k = m\\
       2 p + j + k \leqslant n
     \end{array}}} \frac{n!}{p!j!k! (n - 2 p - j - k) !} | \xi_{} |^j_1 |x -
     y|^{n - 2 p - j - k} d^{\nu n} || \lambda || (\xi_{}) . \]
  Since, for $q \geqslant 0$,
  \begin{equation}
    \label{aminah22} \sum_{j + k = q} \frac{1}{k!} \frac{| \xi |_1^j}{j!}
    \leqslant e^{| \xi |_1},
  \end{equation}
  one gets (\ref{aminah16}).
\end{proof}

\medskip
We can now prove Proposition \ref{aminah4}.
\bigskip

\begin{proof}
  We claim that the right hand side of (\ref{aminah10}) is well defined and
  analytic on $\demiplan{\times \mathbbm{C}^{2 \nu}}$. Let $| \lambda |$ as in
  Lemma \ref{aminah11}. For $n \geqslant 1$ and $(t, x, y) \in \demiplan{}
  \times \mathbbm{C}^{2 \nu}$, let
  \[ d^{^{\nu n}} W_n (t, x, y) \assign e^{i \lp{1} y + s (x - y) \rp{1} \cdot
     \xi} e^{- t s (1 - s) \cdot_n \xi \otimes \xi} \lb{1} e^{t \tilde{P}_n}
     d^{\nu n} \mu_{\bar{z}} (\xi) \rb{1} \ve{1}_{\tmscript{\begin{array}{l}
       \bar{z}_1 = x - y\\
       \ldots\\
       \bar{z}_n = x - y
     \end{array}}} \]
  and
  \begin{equation}
    \label{aminah24} \tilde{v}_n (t, x, y) \assign \int_{0 < s_1 < \cdots <
    s_n < 1} \int_{\mathbbm{R}^{\nu n}} d^{^{\nu n}} W_n (t, x, y) .
  \end{equation}
  Let $R > 0$ and suppose now that $(x, y) \in \Upsilon_R$. \ By
  (\ref{aminah10.5}) and Lemma \ref{aminah11}
  \[ |d^{^{\nu n}} W_n (t, x, y) | \leqslant n!e^{(R + 1) | \xi |_1}
     \sum_{\tmscript{\begin{array}{c}
       2 p + q \leqslant n\\
       p, q \geqslant 0
     \end{array}}} \frac{1}{p!} \frac{|x - y|^{n - 2 p - q}}{(n - 2 p - q) !}
     |t|^{p + q} d^{\nu n} \| \lambda \|(\xi_{}) . \]
  Let
  \[ A \assign \int_{\mathbbm{R}^{\nu}} \exp \lp{1} (1 + R) | \xi | \rp{1} d|
     \lambda |^{\star} (\xi) . \]
  Then
  \[ \int_{0 < s_1 < \cdots < s_n < 1} \int_{\mathbbm{R}^{\nu n}} |d^{^{\nu
     n}} W_n (t, x, y) | \leqslant A^n \sum_{2 p + q \leqslant n} \frac{1}{p!}
     \frac{|x - y|^{n - 2 p - q}}{(n - 2 p - q) !} |t|^{p + q} . \]
  Therefore
  \begin{eqnarray*}
    \mathcal{Q} \assign &  & 1 + \sum_{n \geqslant 1} \int_{0 < s_1 < \cdots <
    s_n < 1} \int_{\mathbbm{R}^{\nu n}} |d^{^{\nu n}} W_n (t, x, y) |\\
    \leqslant &  & e^{A|x - y|} \sum_{p, q \geqslant 0} A^{2 p + q}
    \frac{|t|^{p + q}}{p!}
  \end{eqnarray*}
  since, for $p, q \in \mathbbm{N}$,
  \[ \sum_{n \geqslant 2 p + q} A^{n - 2 p - q} \frac{|x - y|^{n - 2 p -
     q}}{(n - 2 p - q) !} = e^{A|x - y|} . \]
  Hence $\mathcal{Q}< + \infty$ if $A|t| < 1$. Set $T_R \assign \frac{1}{A}$.
  Then $\tilde{v}_n$ and hence
  \[ \tilde{v} \assign 1 + \sum_{n \geqslant 1} \tilde{v}_n \]
  are well defined on $\demidisque{T_R} \times \Upsilon_R$. By dominated
  convergence theorem, one can also check that $\tilde{v}$ is analytic on
  $\demidisque{T_R} \times \Upsilon_R$.
  
  Let us now prove that $\tilde{v}_n = v_n$. One has
  \[ V_n (t, x_1, y_1, \ldots) = \int_{\mathbbm{R}^{\nu n}} \exp \lp{2} i
     \lp{1} y_1 + s_1 (x_1 - y_1) \rp{1} \cdot \xi_1 + \cdots + i (y_n + s_n
     (x_n - y_n) \rp{1} \cdot \xi_n \rp{2} \times \]
  \[ \lb{1} (x_n - y_n) \cdot d \tmmathbf{\lambda}_{} (\xi_n) + t d \lambda_0
     (\xi_n) \rb{1} \cdots \lb{1} (x_n - y_n) \cdot d \tmmathbf{\lambda}_{}
     (\xi_n) + t d \lambda_0 (\xi_n) \rb{1} . \]
  Then
  \[ \partial_{x_{\alpha}} \cdot \partial_{y_{\beta}} V_n (t, x_1, y_1,
     \ldots) = \int_{\mathbbm{R}^{\nu n}} \exp \lp{2} i \lp{1} y_1 + s_1 (x_1
     - y_1) \rp{1} \cdot \xi_1 + \cdots \rp{2} \times \]
  \[  \lb{1} (I + I I + I I I) d^{\nu n} \mu_{\bar{z}} (\xi) \rb{1}
     \ve{1}_{\tmscript{\begin{array}{l}
       \bar{z}_1 = x - y\\
       \ldots\\
       \bar{z}_n = x - y
     \end{array}}} . \]
  where
  \[ I \assign - s_{\alpha} (1 - s_{\beta}) \xi_{\alpha} \cdot \xi_{\beta}, I
     I \assign - \partial_{\bar{z}_{\alpha}} \cdot \partial_{\bar{z}_{\beta}},
  \]
  \[ I I I \assign i \lp{1} (1 - s_{\beta}) \xi_{\beta} \cdot
     \partial_{\bar{z}_{\alpha}} - s_{\alpha} \xi_{\alpha} \cdot
     \partial_{\bar{z}_{\beta}} \rp{1} . \]
  This proves that $\tilde{v}_n = v_n$ and therefore Remark \ref{aminah9}.
  Then $v = \tilde{v}$ and $v \in \text{$\mathcal{A}( \demidisque{T_R} \times
  \Upsilon_R)$}$. We claim that the function
  \begin{equation}
    \label{aminah27} u = (4 \pi t)^{- \nu / 2} e^{- \frac{(x - y)^2}{4 t}} v,
  \end{equation}
  where $v$ is given by (\ref{aminah4.5}), is a solution of (\ref{venus12}).
  Indeed, any solution $u$ of (\ref{venus12}) is obtained, using
  (\ref{aminah27}), from a solution $v$ of the conjugate equation
  \[ \left\{ \begin{array}{l}
       \lp{1} \text{$\partial_t + \frac{x - y}{t} \cdot \partial_x \rp{1} v =
       \partial_x^2 v$} + \frac{(x - y) \cdot \tmmathbf{c} (x)}{t} v + \lp{1}
       c_0 (x) - \partial_x \cdot \tmmathbf{c} (x) \rp{1} v - 2 \tmmathbf{c}
       (x) \cdot \partial_x v\\
       \\
       v_{} |_{t = 0^+, x = y} =_{} \mathbbm{1}
     \end{array} \right. . \]
  Let $v_0 = \mathbbm{1} .$ It is then sufficient to verify that $v_n$
  satisfies
  \begin{equation}
    \label{aminah36} \left\{ \begin{array}{l}
      \lp{1} \partial_t + \frac{x - y}{t} \cdot \partial_x \rp{1} v_n =
      \partial_x^2 v_n + \frac{(x - y) \cdot \tmmathbf{c}}{t} v_{n - 1} + (c_0
      - \partial_x \cdot \tmmathbf{c}) v_{n - 1} - 2 \tmmathbf{c} \cdot
      \partial_x v_{n - 1} \\
      \\
      v_n |_{t = 0^+, x = y} =_{} 0
    \end{array} \right.
  \end{equation}
  for $(t, x, y) \in \demidisque{T_R} \times \Upsilon_R$, $n \geqslant 1$. It
  suffices to check (\ref{aminah36}) for $(t, x, y) \in] 0, T_R [\times
  \Upsilon_R$. By (\ref{aminah5})
  \begin{equation}
    \label{aminah38} v_n (t, x, y) = \int_{0 < s_1 < \cdots < s_n < t} \lb{1}
    \exp (t P_n) V_n^{\natural} (t, x_1, y_1, \ldots) \rb{1}
    \ve{1}_{\tmscript{\begin{array}{l}
      x_1 = x, y_1 = y\\
      \ldots\\
      x_n = x, y_n = y
    \end{array}}} d^n s
  \end{equation}
  where
  \[ V_n^{\natural} (t, x_1, y_1, \ldots) \assign \lb{2} \frac{x_n - y_n}{t}
     \cdot \tmmathbf{c} (y_n + s_n \frac{x_n - y_n}{t}) + c_0 (y_n + s_n
     \frac{x_n - y_n}{t}) \rb{2} \cdots \]
  \[ \text{$\lb{2} \frac{x_1 - y_1}{t} \cdot \tmmathbf{c} (y_1 + s_1 \frac{x_1
     - y_1}{t}) + c_0 (y_1 + s_1 \frac{x_1 - y_1}{t}) \rb{2} .$} \]
  One has $\lp{1} \partial_t + \frac{x - y}{t} \cdot \partial_x \rp{1} v_n = I
  + I I$ where $I$ is obtained by differentiating the domain of the integral
  in (\ref{aminah38}) whereas $I I$ is obtained by differentiating the
  integrand. Then
  \[ I = \int_{0 < s_1 < \cdots < s_{n - 1} < t} \lb{1} \exp (t P_n)
     V_n^{\natural} (t, x_1, y_1, \ldots) \rb{1}
     \ve{1}_{\tmscript{\begin{array}{l}
       s_n = t\\
       x_1 = x, y_1 = y\\
       \ldots\\
       x_n = x, y_n = y
     \end{array}}} d^{n - 1} s. \]
  Note that
  \[ V_n^{\natural} (t, x_1, y_1, \ldots) |_{s_n = t} = \lp{2} \frac{x_n -
     y_n}{t} \cdot \tmmathbf{c} (x_n) + c_0 (x_n) \rp{2} V_{n - 1}^{\natural}
     (t, x_1, y_1, \ldots) . \]
  Since
  \[ P_n = \partial_{x_n} \cdot \partial_{y_n} + 2 ( \sum_{j = 1}^{n - 1}
     \partial_{x_j}) \cdot \partial_{y_n} + P_{n - 1}, \]
  one gets
  \begin{equation}
    \label{aminah40} I = \lp{2} \frac{(x - y) \cdot \tmmathbf{c} (x)}{t} + c_0
    (x) \rp{2} v_{n - 1} - \partial_x \cdot \tmmathbf{c} (x) v_{n - 1} - 2
    \tmmathbf{c} (x) \cdot \partial_x v_{n - 1} .
  \end{equation}
  Let us evaluate $I I$. By Remark \ref{aminah10.2}
  \begin{equation}
    \label{aminah42} \lp{1} \partial_t + \sum_{j = 1}^n \frac{x_j - y_j}{t}
    \cdot \partial_{x_j} \rp{1} e^{t P_n} = \lp{1} \sum_{j = 1}^n
    \partial_{x_j} \rp{1}^2 e^{t P_n} + e^{t P_n} \lp{1} \partial_t + \sum_{j
    = 1}^n \frac{x_j - y_j}{t} \cdot \partial_{x_j} \rp{1} .
  \end{equation}
  Let $j \in \{1, \ldots, n\}$ and $\alpha \in \{1, \ldots, \nu\}$. Denoting
  by $x_{j, \alpha}$ the $\alpha$-coordinate of $x_j \in \mathbbm{C}^{\nu}$,
  one has
  \begin{equation}
    \lp{1} \partial_t + \frac{x_j - y_j}{t} \cdot \partial_{x_j} \rp{1} \lp{2}
    \frac{x_{j, \alpha} - y_{j, \alpha}}{t} \rp{2} = 0.
  \end{equation}
  Then
  \begin{equation}
    \label{aminah44} \lp{1} \partial_t + \sum_{j = 1}^n \frac{x_j - y_j}{t}
    \cdot \partial_{x_j} \rp{1} V_n^{\natural} (t, x_1, y_1, \ldots) = 0.
  \end{equation}
  Hence, by (\ref{aminah42}) and (\ref{aminah44})
  \[ \lp{1} \partial_t + \frac{x - y}{t} \cdot \partial_x - \partial_x^2
     \rp{1} \lb{1} \exp (t P_n) V_n^{\natural} (t, x_1, y_1, \ldots) \rb{1}
     \ve{1}_{\tmscript{\begin{array}{l}
       x_1 = x, y_1 = y\\
       \ldots\\
       x_n = x, y_n = y
     \end{array}}} = 0. \]
  Then
  \begin{equation}
    I I = \partial_x^2 v_n . \label{aminah46}
  \end{equation}
  Hence, by (\ref{aminah40}) and (\ref{aminah46}), we have checked
  (\ref{aminah36}) (the second line is trivial).
\end{proof}

For the proof of Theorem \ref{venus9}, we need [Ha4, Lemma 4.9]:

\begin{lemma}
  \label{aminah60}Let $m \geqslant 1$. For $B \in \mathbbm{C}$, let $\tau
  \longrightarrow K_m (B, \tau)$ be the Borel transform of the function $t
  \longrightarrow t^m \exp (- Bt)$. Then, for $\tau \in \mathbbm{C}$ and $B
  \geqslant 0$
  \begin{equation}
    \label{aminah62} |K_m (B, \tau) | \leqslant \frac{| \tau |^m}{m!} \exp
    \lp{2} 2 \sqrt{B} | \mathcal{I} m (\tau^{1 / 2}) | \rp{2} .
  \end{equation}
\end{lemma}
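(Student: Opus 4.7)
The plan is to derive a Poisson-type integral representation for $K_m(B, \tau)$ and then estimate the cosine in the integrand via the elementary inequality $|\cos z| \leqslant e^{|\mathcal{I}m z|}$. From the Taylor expansion $t^m e^{-Bt} = \sum_{k \geqslant 0} \frac{(-B)^k}{k!} t^{m+k}$ and the definition of the Borel transform,
\[
K_m(B, \tau) = \sum_{k \geqslant 0} \frac{(-B)^k}{k!\,(m+k)!}\, \tau^{m+k},
\]
which is entire in $(B, \tau) \in \mathbbm{C}^2$.

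The key step is to establish that, for $B \geqslant 0$ and $\tau \in \mathbbm{C}$,
\[
K_m(B, \tau) = \frac{\tau^m}{\sqrt{\pi}\, \Gamma(m + 1/2)} \int_{-1}^{1} (1 - s^2)^{m - 1/2} \cos\!\bigl(2 s \sqrt{B \tau}\bigr)\, ds, \qquad (\ast)
\]
where the integrand is in fact an entire function of $B\tau$ (only even powers of $\sqrt{B\tau}$ enter through $\cos$), so no branch choice is needed. To verify $(\ast)$, I would expand $\cos(2 s \sqrt{B \tau}) = \sum_{j \geqslant 0} (-1)^j (2 s)^{2 j} (B \tau)^j / (2 j)!$ and integrate term by term (legitimate since $(1-s^2)^{m-1/2}$ is bounded on $[-1,1]$ for $m \geqslant 1$), applying the Beta-integral
\[
\int_{-1}^{1} s^{2j} (1 - s^2)^{m - 1/2}\, ds = \frac{\Gamma(j + 1/2)\, \Gamma(m + 1/2)}{(j + m)!}
\]
together with $\Gamma(j + 1/2) = \sqrt{\pi}\, (2j)! / (4^j j!)$. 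The extraneous factors $(2j)!$ and $4^j$ cancel those introduced by $\cos$, and the series for $K_m(B, \tau)$ is recovered exactly.

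For the bound itself, I would invoke the elementary identity $|\cos z|^2 = \cos^2(\mathcal{R}e z) + \sinh^2(\mathcal{I}m z)$, which gives $|\cos z| \leqslant \cosh(|\mathcal{I}m z|) \leqslant e^{|\mathcal{I}m z|}$. Applied with $z = 2 s \sqrt{B \tau}$, and using $\mathcal{I}m(\sqrt{B \tau}) = \sqrt{B}\, \mathcal{I}m(\tau^{1/2})$ for $B \geqslant 0$ (principal branch), this yields, uniformly in $s \in [-1, 1]$,
\[
\bigl|\cos(2 s \sqrt{B \tau})\bigr| \leqslant e^{2|s|\sqrt{B}\,|\mathcal{I}m\,\tau^{1/2}|} \leqslant e^{2 \sqrt{B}\,|\mathcal{I}m\,\tau^{1/2}|}.
\]
Substituting this majorant into $(\ast)$ and using the Beta evaluation $\int_{-1}^{1}(1-s^2)^{m-1/2}\,ds = \sqrt{\pi}\, \Gamma(m+1/2)/m!$ produces the claimed bound $|K_m(B, \tau)| \leqslant \frac{|\tau|^m}{m!}\, e^{2\sqrt{B}\,|\mathcal{I}m\,\tau^{1/2}|}$.

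The only step that genuinely requires work is the derivation of $(\ast)$ and the bookkeeping of the Gamma factors; beyond that there is no analytic obstacle, since both sides of $(\ast)$ are entire in $(B, \tau)$ and the equality reduces to matching formal power series in $B\tau$.
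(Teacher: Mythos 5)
Your proof is correct and self-contained. The paper itself does not prove Lemma~\ref{aminah60}; it simply invokes [Ha4, Lemma 4.9], so there is no in-text argument to compare against, but your derivation is sound in every step. Your Borel transform computation
\[
K_m(B,\tau) = \sum_{k\geqslant 0}\frac{(-B)^k}{k!\,(m+k)!}\,\tau^{m+k}
\]
is exactly right, and the integral representation $(\ast)$ is the classical Poisson integral for the Bessel function in disguise: $K_m(B,\tau)$ equals $\tau^{m/2}B^{-m/2}J_m(2\sqrt{B\tau})$ up to branch normalization, and $(\ast)$ is the Poisson formula $J_m(z) = \frac{(z/2)^m}{\sqrt{\pi}\,\Gamma(m+1/2)}\int_{-1}^1(1-s^2)^{m-1/2}\cos(sz)\,ds$ specialized to $z = 2\sqrt{B\tau}$. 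Your term-by-term verification of $(\ast)$ via the Beta integral and the duplication formula $\Gamma(j+1/2)=\sqrt{\pi}\,(2j)!/(4^j j!)$ is clean; the interchange of sum and integral is justified since the partial sums of the cosine series are dominated on $[-1,1]$ by $\cosh(2\sqrt{|B\tau|})$. The branch bookkeeping is also handled correctly: with the paper's principal-branch convention and $B\geqslant 0$, one has $\arg(B\tau)=\arg\tau$, hence $(B\tau)^{1/2}=\sqrt{B}\,\tau^{1/2}$ and $|\mathcal{I}m\sqrt{B\tau}|=\sqrt{B}\,|\mathcal{I}m\,\tau^{1/2}|$. Finally, the $j=0$ case of your Beta evaluation, $\int_{-1}^1(1-s^2)^{m-1/2}\,ds=\sqrt{\pi}\,\Gamma(m+1/2)/m!$, exactly absorbs the normalizing constant and yields the stated bound $|K_m(B,\tau)|\leqslant\frac{|\tau|^m}{m!}\,e^{2\sqrt{B}\,|\mathcal{I}m\,\tau^{1/2}|}$. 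This Bessel-Poisson route is almost certainly the same mechanism underlying [Ha4, Lemma 4.9], so your argument is both correct and a faithful reconstruction of the cited proof.
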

\medskip
Let us now prove Theorem \ref{venus9}.
\bigskip

\begin{proof}
  First, let us define $\hat{v}$. Remark \ref{aminah9} and Lemma
  \ref{aminah11} suggest the following construction. Let $d^{\nu n} \hat{F}_n,
  \hat{v}_n, \hat{v}$ be defined by
  \begin{equation}
    d^{\nu n} \hat{F}_n \assign \sum_{m \leqslant n} \exp \lp{2} i \lp{1} y +
    s (x - y) \rp{1} \cdot \xi \rp{2} K_m \lp{1} s (1 - s) \cdot_n \xi \otimes
    \xi, \tau \rp{1} d^{\nu n} \mu_{n, m} (\xi),
  \end{equation}
  \begin{equation}
    \label{aminah66} \hat{v}_n (\tau, x, y) \assign \int_{0 < s_1 < \cdots <
    s_n < 1} \int_{\mathbbm{R}^{\nu n}} d^{\nu n} \hat{F}_n d^n s,
  \end{equation}
  \begin{equation}
    \hat{v} \text{$\assign \mathbbm{1} + \sum_{n \geqslant 1} \hat{v}_n$} .
  \end{equation}
  Let us check that $\hat{v}_n$ and $\hat{v}$ are well defined. Let $\kappa, R
  > 0$. Let $(x, y) \in \Upsilon_R$ and $\tau \in \parabole{\kappa}$. By
  (\ref{aminah16}) and (\ref{aminah62})
  \[ |d^{\nu n} \hat{F}_n | \leqslant e^{(1 + R) | \xi |_1} n! \sum_{2 p + q
     \leqslant n} \frac{| \tau |^{p + q} |x - y|^{n - 2 p - q}}{p! (p + q) !
     (n - 2 p - q) !} \times \]
  \[ \exp \lp{1} 2 \kappa^{1 / 2} \sqrt{s (1 - s) \cdot_n \xi \otimes \xi}
     \rp{1} d^{\nu n} \| \lambda \|(\xi_{}) . \]
  By (\ref{aminah10.5})
  \begin{eqnarray*}
    2 \kappa^{1 / 2} \sqrt{s (1 - s) \cdot_n \xi \otimes \xi} \leqslant &  & 2
    \kappa^{1 / 2} \sqrt{n (\xi_1^2 + \cdots + \xi_n^2)}\\
    \leqslant &  & 2 \times \lp{2} \frac{2 \kappa n}{\varepsilon} \srp{2}{1 /
    2}{} \times \sqrt{\frac{\varepsilon}{2} (\xi_1^2 + \cdots + \xi_n^2)}\\
    \leqslant &  & \frac{2 \kappa n}{\varepsilon} + \frac{\varepsilon}{2}
    (\xi_1^2 + \cdots + \xi_n^2) .
  \end{eqnarray*}
  Hence
  \[ |d^{\nu n} \hat{F}_n | \leqslant \exp \lp{2} (1 + R) | \xi |_1 +
     \frac{\varepsilon}{2} (\xi_1^2 + \cdots + \xi_n^2) \varepsilon + \frac{2
     \kappa n}{\varepsilon} \rp{2} \times \]
  \[ n! \sum_{2 p + q \leqslant n} \frac{| \tau |^{p + q} |x - y|^{n - 2 p -
     q}}{p! (p + q) ! (n - 2 p - q) !} d^{\nu n} \| \lambda \|(\xi_{}) . \]
  Let
  \[ C \assign \int_{\mathbbm{R}^{\nu}} \exp \lp{2} \frac{2
     \kappa}{\varepsilon} + \frac{\varepsilon}{2} \xi^2 + (1 + R) | \xi |
     \rp{2} d| \lambda |^{\star} (\xi) . \]
  Then
  \[ \int_{0 < s_1 < \cdots < s_n < 1} \int_{\mathbbm{R}^{\nu n}} |d^{\nu n}
     \hat{F}_n |d^n s \leqslant C^n \sum_{2 p + q \leqslant n} \frac{| \tau
     |^{p + q} |x - y|^{n - 2 p - q}}{p! (p + q) ! (n - 2 p - q) !} . \]
  Then
  \begin{eqnarray*}
    \mathcal{Q} \assign &  & 1 + \sum_{n \geqslant 1} \int_{0 < s_1 < \cdots <
    s_n < 1} \int_{\mathbbm{R}^{\nu n}} |d^{\nu n} \hat{F}_n |d^n s\\
    \leqslant &  & e^{C|x - y|} \sum_{p, q \geqslant 0} \frac{| \tau |^{p +
    q}}{p! (p + q) !} C^{2 p + q}
  \end{eqnarray*}
  since, for $p, q \in \mathbbm{N}$,
  \[ \sum_{n \geqslant 2 p + q} C^{n - 2 p - q} \frac{|x - y|^{n - 2 p -
     q}}{(n - 2 p - q) !} = e^{C|x - y|} . \]
  Hence $\mathcal{Q}< \infty$. This proves that $\hat{v}_n$ and $\hat{v}$ are
  well defined on $\mathbbm{C}^{1 + 2 \nu}$ since $\kappa$ and $R$ are
  arbitrary. By dominated convergence theorem, one can also check that
  $\hat{v}_n$, hence $\hat{v}$, are analytic on $\mathbbm{C}^{1 + 2 \nu}$.
  
  Since $p! (p + q) ! \geqslant p!p!q! \geqslant \frac{(2 p) !q!}{2^{2 p}}$
  \[ \mathcal{Q} \leqslant e^{C|x - y|} e^{2 C^{} | \tau |^{1 / 2}} e^{C|
     \tau |} . \]
  Then $\hat{v}$ satisfies (\ref{venus14}). By (\ref{aminah24}),
  (\ref{aminah66}) and the definition of $K_m$ (Lemma \ref{aminah60}),
  $\hat{v}_n$ is the Laplace transform of $v_n = \tilde{v}_n$. Then $v$
  defined by (\ref{aminah4.5}) is the Laplace transform of $\hat{v}$. Hence
  Proposition \ref{aminah4} implies Theorem \ref{venus9}. 
\end{proof}

\bigskip

\ \ \ \ \ \ \ \ \ \ \ \ \ \ \ \ \ \ \ \ \ \ \ \ \ \ \ \ \ \ \ \ \ \ \ \ \ \ \
\ \ \ REFERENCES

\bigskip

[B-B] R. Balian and C. Bloch, \tmtextit{\tmtextup{Solutions of the
Schr\"odinger equation in terms of classical paths}}, Ann. of Phys.
\tmtextbf{85} (1974), 514-545.

[F-H-S-S] D. Fliegner, P. Haberl, M. G.Schmidt, C. Schubert,
\tmtextit{\tmtextup{The higher derivative expansion of the effective action by
the string inspired method II}}, Annals of Physics \tmtextbf{264-1} (1998),
51-74.

[Ha4] T. Harg\'e, Borel summation of the small time expansion of the heat
kernel. The scalar potential case (2013).

[On] E. Onofri, On the high-temperature expansion of the density matrix,
American Journal Physics, 46-4 (1978), 379-382.

[Ru] W. Rudin, Real and complex analysis, section 6.

\bigskip

D\'epartement de Math\'ematiques, Universit\'e de Cergy-Pontoise, 95302
Cergy-Pontoise, France.

\end{document}